\documentclass[10pt]{article}
\usepackage{amsmath}
\usepackage{amsfonts}
\usepackage{amssymb}
\usepackage{mathrsfs}
\usepackage{amsthm}
\usepackage{setspace}
\setstretch{1.25}
\usepackage{fancyhdr}
%\pagestyle{fancy}
%\fancyhf{}

%\chead{Several classes of optimal ternary cyclic codes with minimum distance four} 
%\usepackage{booktabs}
\usepackage{longtable}

\cfoot{\thepage}

\usepackage{float}
\usepackage{subfig}
\usepackage{fancybox,graphicx}
\usepackage{subfig}
\usepackage{caption}
\usepackage{color}
\usepackage{authblk}

\usepackage[colorlinks, linkcolor=red, citecolor=blue]{hyperref}

\usepackage{accents}
\usepackage[titletoc,title]{appendix}
\usepackage{cite}

\usepackage{geometry}
\geometry{left=2.5cm, right=2.5cm, top=2.7cm, bottom=2.2cm}

\title{Several new  classes of optimal ternary cyclic codes
	with \\two or three zeros}

\newtheorem{theorem}{Theorem}
\newtheorem{lemma}{Lemma}
\newtheorem{corollary}{Corollary}

\newtheorem{remark}{Remark}

\newtheorem{example}{Example}

\newcommand{\RNum}[1]{\uppercase\expandafter{\romannumeral #1\relax}}

\newcommand{\F}{\ensuremath{\mathbb F}}

\newcommand{\done}{\hfill $\Box$ }

\newcommand{\notequiv}{{\,\not\equiv\, }}

\newcommand{\fp}{{\mathbb F}_{3}}

\newcommand{\fthree}{{\mathbb F}_{3}}

%\author[1]{}

\author{
Gaofei Wu
\thanks{G. Wu and Z. You are with the State Key Laboratory of Integrated Service Networks, School of Cyber Engineering, Xidian University, Xi'an, 710071, China.
G. Wu is also with the Advanced Cryptography and System Security Key Laboratory of Sichuan Province. 
Email: gfwu@xidian.edu.cn, martingale2002@163.com. }
, Zhuohui You, 
Zhengbang Zha
\thanks{Z. Zha is with the College of Mathematics and System Sciences, Xinjiang University, Urumqi 830046, China.
Email: zhazhengbang@163.com.}
, 
and Yuqing Zhang
\thanks{Y. Zhang is with  the   National Computer Network Intrusion Protection Center, University of Chinese Academy of Sciences, Beijing 101408, China. Email: zhangyq@ucas.ac.cn.}
}
\date{}

\begin{document}
	
	\maketitle
	
	\begin{abstract}

	Cyclic codes are a subclass of linear codes and
	have wide applications in data storage systems, communication systems and consumer electronics
	due to their efficient encoding and decoding algorithms.
	Let  $\alpha $ be  a generator of $\F_{3^m}^*$, where 
	$m$ is a positive integer. Denote by 
	$\mathcal{C}_{(i_1,i_2,\cdots, i_t)}$
	the  cyclic code with generator polynomial $m_{\alpha^{i_1}}(x)m_{\alpha^{i_2}}(x)\cdots m_{\alpha^{i_t}}(x)$,
	where  ${{m}_{\alpha^{i}}}(x)$ is the minimal polynomial of ${{\alpha }^{i}}$ over  ${{\mathbb{F}}_{3}}$.
	In this paper, by analyzing
	the solutions of certain equations over finite fields,  we present four classes of optimal ternary  cyclic codes $\mathcal{C}_{(0,1,e)}$ and $\mathcal{C}_{(1,e,s)}$
	with parameters $[3^m-1,3^m-\frac{3m}{2}-2,4]$, where $s=\frac{3^m-1}{2}$.
	In addition,
	by     determining the solutions of certain equations  and    analyzing the irreducible factors of certain polynomials over $\F_{3^m}$,
	we  present four classes  of
	optimal ternary cyclic codes  $\mathcal{C}_{(2,e)}$ and  $\mathcal{C}_{(1,e)}$ with parameters $[3^m-1,3^m-2m-1,4]$.
	We show that our new optimal cyclic codes are  inequivalent to the known ones.\\
	\\
	{\bf Index Terms } finite fields, linear codes, minimum distance, cyclic codes.
	\end{abstract}

	\section{Introduction}

	Cyclic codes are a very important subclass of linear codes
	and have been extensively studied.  Throughout this paper,
	let ${{\mathbb{F}}_{{{3}^{m}}}}$ denote the  finite field with ${{3}^{m}}$ elements.
	An $[n,k,d]$ linear code over $\F_3$ is a $k$-dimensional subspace of $\F_3^n$ with minimum Hamming distance $d$.
	An $[n,k]$ \emph{cyclic} code $ \mathcal{C}$ is an $[n,k]$ linear code with the property that any cyclic shift of a codeword is another codeword of $\mathcal{C}$.
	Let $\gcd(n,3)=1$. By identifying any codeword $(c_0,c_1,\cdots, c_{n-1})\in \mathcal{C}$ with
	$$
	c_0+c_1x+c_2x^2+\cdots+c_{n-1}x^{n-1}\in \fp[x]/(x^n-1),
	$$
	any cyclic code of length $n$ over $\fthree$ corresponds to an ideal of the polynomial ring $\fthree[x]/(x^n-1)$.
	Notice  that every ideal of $\fthree[x]/(x^n-1)$ is principal. Thus, any cyclic code can be expressed as
	$\langle g(x)\rangle$, where $g(x)$ is monic and has the least degree. The polynomial $g(x)$ is called the \emph{generator polynomial}
	and $h(x)=(x^n-1)/g(x)$ is called the \emph{parity-check polynomial} of $\mathcal{C}$.
	%Let $A_i$ denote the number of codewords with Hamming
	% weight $i$ in a code $\mathcal{C}$ of length $n$. The
	%  weight enumerator of $\mathcal{C}$ is defined by
	%\[1+A_1x+A_2x^2+\cdots+A_{n}x^n.\]
	Let $\alpha$ be a generator   of  ${{\mathbb{F}}_{{{3}^{m}}}^*}$ and
	let ${{m}_{\alpha^{i}}}(x)$ denote the minimal polynomial of ${{\alpha }^{i}}$ over  ${{\mathbb{F}}_{3}}$.
	% Suppose that $i_1,i_2,\cdots, i_t$ belong to $t$ different $p$-cyclotomic cosets.
	We denote by $\mathcal{C}_{(i_1,i_2,\cdots, i_t)}$
	the  cyclic code with generator polynomial ${{m}_{{{\alpha}^{{{i}_{1}}}}}}(x){{m}_{{{\alpha}^{{{i}_{2}}}}}}(x)\cdots{{m}_{{{\alpha}^{{{i}_{t}}}}}}(x)$.
	In 2005, Carlet, Ding, and Yuan  \cite{carlet2005linear}
	constructed some
	optimal ternary cyclic codes $ \mathcal{C}_{(1,e)}$ with parameters $[3^m-1,3^m-2m-1,4]$ by using perfect
	nonlinear monomials $x^e$.
	In 2013,
	Ding and Helleseth
	\cite{ding2013optimal} constructed  several
	classes of optimal ternary cyclic codes $\mathcal{C}_{(1,e)}$ by  using monomials including  almost
	perfect   nonlinear  (APN)  monomials.
	Moreover, they  presented  nine open problems on optimal  ternary cyclic codes $\mathcal{C}_{(1,e)}$.
	Open problems 7.5 and 7.8 were solved in
	\cite{li2015conjecture} and \cite{li2014optimal}, respectively.  
	% In \cite{liffa14},  by analyzing irreducible factors
	%of certain polynomials with low degrees over finite fields.
	In \cite{li2014optimal}, Li et al.  also presented several classes
	of optimal ternary cyclic codes with parameters $[3^m-1,3^m-2m-1,4]$ or
	$[3^m-1,3^m-2m-2,5]$.
	In 2016, Wang and Wu \cite{wang2016several} presented
	four classes of optimal
	ternary cyclic codes with parameters
	$[3^m-1,3^m-2m-1,4]$ by analyzing the solutions
	of certain equations over $\F_{3^m}$. It was shown that
	some previous results on  optimal ternary cyclic codes given in \cite{ding2013optimal}\cite{li2014optimal}\cite{ding2013five}\cite{zhou2013seven}
	are special cases of the constructions given in \cite{wang2016several}.
	% Later, Zhao, Luo and Sun \cite{zhaoluoffa22}  analyzed
	% and refined the constructions given in \cite{wang2016several}, and solved the open  problem in  
	In 2019, another open problem 7.12  proposed in \cite{ding2013optimal} was settled independently in      \cite{han2019open}  and \cite{liu2022some}.
	Later,  Zha et al.   \cite{zha2020new}\cite{zha2021further} presented several  classes of optimal ternary cyclic codes $\mathcal{C}_{(1,e)}$ and $\mathcal{C}_{(u,v)}$, they also proposed 
	a link between the ternary cyclic codes $\mathcal{C}_{(1,e)}$ and $\mathcal{C}_{(\frac{3^m+1}{2},e+\frac{3^m-1}{2})}$. In 2022, 
	Zhao, Luo, and Sun \cite{zhao2022two} presented
	two families of optimal ternary cyclic codes and solved the remain problem in \cite{wang2016several}.  
    Recently,  Ye and Liao \cite{ye2023ding}  gave a  counterexample of the open problem 7.13 proposed in \cite{ding2013optimal} and presented three classes of optimal ternary cyclic codes  $\mathcal{C}_{(1,e)}$. 
   A sufficient and necessary condition for   the ternary cyclic code
	$\mathcal{C}_{(u,v)}$ to be optimal  were given in \cite{wang2022several}. Based on this basic result,  several classes of optimal ternary cyclic codes $\mathcal{C}_{(u,v)}$ were also presented
	in \cite{wang2022several}. Recently, Li and Liu \cite{li2023some} proposed some classes of  optimal ternary cyclic codes $\mathcal{C}_{(2,e)}$ with parameters $[3^m-1,3^m-2m-1,4]$.
	There have been some other  optimal ternary cyclic codes constructed in the literature, see
	\cite{fan2016class}\cite{liu2021two}\cite{yan2018family}\cite{fan2022two} and the references therein.
	We list the known optimal ternary cyclic codes  $\mathcal{C}_{(1,e)}$ and  $\mathcal{C}_{(u,v)}$ with parameters  $[3^m-1,3^m-2m-1,4]$ in Tables  \ref{tab:1111} and \ref{tab:man},  respectively. 
	
	However, there are only a few constructions of optimal $p$-ary ($p\ge 3$ is a prime)
	cyclic codes $\mathcal{C}_{(0,1,e)}$ and $\mathcal{C}_{(1,e,s)}$ ($s=\frac{p^m-1}{2}$)  with parameters  $[p^m-1,p^m-\frac{3m}{2}-2,4]$. 
	In 2019, Li, Zhu, and Liu \cite{li2019three} 
	constructed a class of optimal ternary cyclic codes  $\mathcal{C}_{(0,1,e)}$ with parameters  $[3^m-1,3^m-\frac{3m}{2}-2,4]$, where  
	$e=3^{\frac{m}{2}}+1$. Recently, 
	Wu, Liu, and Li \cite{wu2023several} generalized the construction in \cite{li2019three} and  presented two classes of optimal $p$-ary (for any odd prime $p$) cyclic codes
	$\mathcal{C}_{(0,1,e)}$ and  $\mathcal{C}_{(1,e,s)}$ with parameters  $[p^m-1,p^m-\frac{3m}{2}-2,4]$. 
	We   summarize some known optimal  $p$-ary cyclic codes  $\mathcal{C}_{(0,1,e)}$ and  $\mathcal{C}_{(1,e,s)}$ with parameters  $[p^m-1,p^m-\frac{3m}{2}-2,4]$
	in Table \ref{tab:2222}.

	In this paper, 	we first  present four classes of optimal ternary cyclic codes $\mathcal{C}_{(0,1,e)}$
	and      $\mathcal{C}_{(1,e,s)}$  with parameters  $[3^m-1,3^m-\frac{3m}{2}-2,4]$  	by analyzing
	the solutions of certain equations  over finite fields,  where $s=\frac{3^m-1}{2}$.  
	Then  we give four  classes  of
	optimal ternary  cyclic codes   with parameters  $[3^m-1,3^m-2m-1,4]$ by
	analyzing the irreducible factors of certain polynomials and  determining the solutions of certain equations  over $\F_{3^m}$.  
     Two of them confirmed some special cases  of the open problem 7.9 proposed in  \cite{ding2013optimal}. 
	We also show that the new cyclic codes constructed in this paper are 
	inequivalent to the known ones.

	The rest of this paper is organized
	as follows. In Section \ref{Section2}, we introduce some preliminaries.
	Four  classes of optimal ternary cyclic codes $\mathcal{C}_{(0,1,e)}$ and  $\mathcal{C}_{(1,e,s)}$ are given in Section \ref{Section3}.
	In Section \ref{Section4}, we present four  classes   of
	optimal ternary cyclic codes with two zeros. Concluding  remarks  are given in Section \ref{Section5}.

\section{Preliminaries}\label{Section2}

Let $p$ be a prime and $m$ be a positive integer.
The $p$-cyclotomic coset modulo $p^m-1$ containing $j$ is defined as
\[C_j=\{j\cdot p^r\pmod{(p^m-1)}: r=0,1,\cdots,l_j-1\},\]
where $l_j$ is the least positive integer such that $j\cdot p^{l_j} \equiv j \bmod{(p^m-1)}.$
Thus the size of $C_j$ is $|C_j|=l_j$. It is known that $l_j|m$. The smallest integer
in $C_j$ is called the coset leader of $C_j$.

The following lemmas will be frequently used in the sequel.

\begin{lemma}\cite{xu2016optimal} \label{ce}
	Let $p$ be a prime and $m$ be a positive
	integer. Let $n={{p}^{m}}-1$.  For any $1\le e\le n-1$,  if
	one of the following conditions is satisfied, then $l_e=|{C_{e}} |=m$:
	\begin{itemize}
		\item [1)]  $1\le \gcd (e, n)\le p-1$,
		\item  [2)]  $ \gcd (e, n)\cdot \gcd(p^j-1,n) \notequiv 0\pmod n$
		for all $1\le j<m$.
	\end{itemize}
\end{lemma}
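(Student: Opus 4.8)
The plan is to prove $l_e = m$ by contradiction, exploiting the already-known fact that $l_e \mid m$. Recall that $l_e$ is by definition the least positive integer with $e\,p^{l_e}\equiv e\pmod{n}$, equivalently the least positive $j$ with $n \mid e(p^{j}-1)$. Since $p^m\equiv 1\pmod n$ while $p^{j}-1<n$ for every $1\le j<m$, the multiplicative order of $p$ modulo $n$ is exactly $m$, which reconfirms $l_e\mid m$. Hence if $l_e\neq m$, then $l_e=:j_0$ satisfies $1\le j_0<m$ and $n\mid e(p^{j_0}-1)$, and I will derive a contradiction from each of the two hypotheses in turn.

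For condition~1, first set $d=\gcd(e,n)$ and write $e=d\,e'$, $n=d\,n'$ with $\gcd(e',n')=1$. Then $n\mid e(p^{j_0}-1)$ is equivalent to $n'\mid e'(p^{j_0}-1)$, and since $\gcd(e',n')=1$ this forces $n'\mid (p^{j_0}-1)$. Now I would bound the two sides against each other: because $d\le p-1$ we have $n'=(p^m-1)/d\ge (p^m-1)/(p-1)=1+p+\cdots+p^{m-1}\ge p^{m-1}+1$, whereas a proper divisor $j_0$ of $m$ satisfies $j_0\le m/2\le m-1$, so $p^{j_0}-1<p^{m-1}<n'$. A positive divisor cannot exceed the nonzero number it divides, so $n'\nmid(p^{j_0}-1)$, a contradiction (the degenerate case $m=1$ gives $l_e\mid 1$, hence $l_e=1=m$ immediately).

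For condition~2, the crux is the elementary divisibility claim that $n\mid e(p^{j_0}-1)$ already forces $n\mid \gcd(e,n)\cdot\gcd(p^{j_0}-1,n)$. I would verify this prime by prime: fixing a prime $\ell$, writing $v_\ell(\cdot)$ for the exponent of $\ell$ in the prime factorisation, and setting $a=v_\ell(n)$, $b=v_\ell(e)$, $c=v_\ell(p^{j_0}-1)$, the hypothesis gives $a\le b+c$, and a short case analysis on whether $b$ and $c$ exceed $a$ shows in every case that $a\le\min(b,a)+\min(c,a)=v_\ell(\gcd(e,n))+v_\ell(\gcd(p^{j_0}-1,n))$. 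Ranging over all $\ell$ yields the claim, i.e.\ $\gcd(e,n)\cdot\gcd(p^{j_0}-1,n)\equiv 0\pmod n$ with $1\le j_0<m$, which directly contradicts the hypothesis of condition~2. In both cases the assumption $l_e<m$ is untenable, so $l_e=|C_e|=m$.

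I expect the valuation argument underpinning condition~2 to be the only genuinely delicate point; everything else is a size comparison or a standard reduction through the greatest common divisor.
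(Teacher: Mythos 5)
The paper does not prove this lemma; it is imported verbatim from \cite{xu2016optimal}, so there is no in-paper argument to compare against. Your proof is correct and self-contained. The reduction to showing that $l_e$ cannot be a proper divisor $j_0$ of $m$ is the right frame, since $l_e$ is precisely the least $j$ with $n \mid e(p^{j}-1)$ and $l_e \mid m$ is already granted in the paper's preliminaries. For condition~1 your cancellation $n' \mid p^{j_0}-1$ with $n'=(p^m-1)/\gcd(e,n)\ge 1+p+\cdots+p^{m-1}>p^{m-1}-1\ge p^{j_0}-1>0$ is airtight (and you only need $j_0\le m-1$, not $j_0\le m/2$). For condition~2 the valuation computation is the genuine content: from $v_\ell(n)\le v_\ell(e)+v_\ell(p^{j_0}-1)$ one gets $v_\ell(n)\le\min(v_\ell(e),v_\ell(n))+\min(v_\ell(p^{j_0}-1),v_\ell(n))$ by the three-way case split you describe, hence $n\mid\gcd(e,n)\cdot\gcd(p^{j_0}-1,n)$, contradicting the hypothesis. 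Both halves check out; I see no gap.
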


\begin{lemma}\cite{xu2016optimal}
	\label{wutingting}
	Let  $e=p^k+1$, where  $1\le k\le m-1$.  Then 
	\begin{enumerate}
		\item [1)] If $m$ is odd, then $|{C_{e}} |=m$;
		\item  [2)] If $m$ is even, then
		$$|{C_{e}} |=\begin{cases}
			\frac{m}{2},\ k=\frac{m}{2}\\
			m,\ k\neq \frac{m}{2}.
		\end{cases}$$
	\end{enumerate}
\end{lemma}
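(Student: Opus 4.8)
The plan is to compute $|C_e|=l_e$ directly from the definition of the cyclotomic coset rather than through gcd estimates. By definition $l_e$ is the least positive integer $\ell$ with $e\cdot p^{\ell}\equiv e\pmod{p^m-1}$. Writing $e=p^k+1$ and using $p^m\equiv 1\pmod{p^m-1}$ to reduce exponents modulo $m$, I would first record
\[
e\cdot p^{\ell}=p^{k+\ell}+p^{\ell}\equiv p^{(k+\ell)\bmod m}+p^{\ell\bmod m}\pmod{p^m-1}.
\]
Since $1\le k\le m-1$, the residue $k$ is nonzero modulo $m$, so $e=p^{k}+p^{0}$ and the reduced right-hand side are both honest sums of two \emph{distinct} powers of $p$ (the two exponents never coincide).

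The key step is a uniqueness statement for two-term base-$p$ representations modulo $p^m-1$. For $p\ge 3$ and $0\le a<b\le m-1$ one has $0<p^{a}+p^{b}\le p^{m-2}+p^{m-1}<p^{m}-1$, the last inequality because $p^{m}-1-(p^{m-2}+p^{m-1})=p^{m-2}(p^{2}-p-1)-1>0$. Hence distinct unordered pairs $\{a,b\}$ yield distinct residues modulo $p^m-1$, and consequently $e\cdot p^{\ell}\equiv e$ holds if and only if the exponent sets agree, namely
\[
\{(k+\ell)\bmod m,\ \ell\bmod m\}=\{k,\ 0\}.
\]

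It then remains to solve this set equation for the smallest positive $\ell$. There are exactly two ways to match the two sides. In the first, $\ell\equiv 0$ and $k+\ell\equiv k\pmod m$, which collapses to $\ell\equiv 0\pmod m$ and contributes the solution $\ell=m$. In the second, $\ell\equiv k$ and $k+\ell\equiv 0\pmod m$; substituting $\ell\equiv k$ turns the latter into $2k\equiv 0\pmod m$, so this branch produces a solution only when $m\mid 2k$, and then the least positive choice is $\ell=k$. For $1\le k\le m-1$ the divisibility $m\mid 2k$ forces $2k=m$, i.e.\ $m$ even and $k=\frac m2$. Collecting the cases: if $m$ is odd, or if $m$ is even with $k\neq\frac m2$, only the first branch applies and $l_e=m$; if $m$ is even and $k=\frac m2$, the second branch gives $\ell=k=\frac m2<m$, so $l_e=\frac m2$. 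This is exactly the claimed dichotomy.

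I expect the main obstacle to be the uniqueness lemma for the two-term representation: one must pin down the size bound $p^{a}+p^{b}<p^{m}-1$ and verify that the two exponents stay distinct, so that congruence modulo $p^m-1$ is equivalent to equality of exponent sets. Everything after that is bookkeeping on residues modulo $m$. As an alternative for the cases yielding $l_e=m$, one could instead invoke Lemma~\ref{ce}: when $\gcd(e,p^m-1)=2\le p-1$ condition~1) applies immediately, while the remaining subcase (where $\gcd(e,p^m-1)=p^{\gcd(k,m)}+1$ with $k\neq\frac m2$) needs condition~2), which again reduces to the same kind of size estimate showing $(p^{\gcd(k,m)}+1)(p^{h}-1)<p^m-1$ for every proper divisor $h$ of $m$.
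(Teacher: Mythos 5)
The paper states this lemma as a citation to \cite{xu2016optimal} and gives no proof of its own, so there is nothing internal to compare against; judged on its own terms, your argument is correct and complete. The reduction of $e\cdot p^{\ell}=p^{k+\ell}+p^{\ell}$ to a sum of two distinct powers $p^{(k+\ell)\bmod m}+p^{\ell\bmod m}$ is valid (the exponents stay distinct precisely because $1\le k\le m-1$), the size bound $p^{a}+p^{b}\le p^{m-1}+p^{m-2}<p^{m}-1$ holds for $p\ge 3$ and $m\ge 2$, and uniqueness of base-$p$ digit strings then legitimately converts the congruence into the set equation $\{(k+\ell)\bmod m,\ \ell\bmod m\}=\{k,0\}$. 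The two matchings are enumerated correctly, and the observation that the second branch requires $m\mid 2k$, hence $2k=m$, is exactly what produces the dichotomy in the statement, including the verification that $\ell=\frac m2$ genuinely achieves $|C_e|=\frac m2$ rather than merely bounding it. Your fallback route through Lemma~\ref{ce} is weaker than the direct computation: it can only certify $|C_e|=m$ and says nothing in the exceptional case $k=\frac m2$ (where condition~2) of that lemma in fact fails, since $(p^{m/2}+1)(p^{m/2}-1)=p^m-1$), so the direct two-term-representation argument is the right choice and should be kept as the main proof.
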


\begin{lemma}\cite{zha2020new}\label{le1}
	Let $m$ be odd and $e=\frac{3^{h}+5}{2}$, where $h$ is an odd integer. Then $e\notin C_{1}$ and $|C_{e}|=m$.
	\end{lemma}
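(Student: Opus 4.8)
The plan is to prove the two assertions separately: that $e$ misses the coset $C_{1}$, and that its own coset has full size $m$. Throughout write $n=3^m-1$, and note that since $h$ is odd we have $h\ge 1$; in the relevant range the integer $3^h+5$ lies in $[0,n)$, so $e=\frac{3^h+5}{2}$ is already its own residue modulo $n$. For $e\notin C_{1}$, I would use that $C_{1}=\{3^i\bmod n:0\le i\le m-1\}$ is exactly the set of genuine powers $1,3,\dots,3^{m-1}$, each smaller than $n$. If $e\in C_{1}$, then $e=3^i$ for some $i$, so $2\cdot 3^{i}=3^h+5$ as an identity of integers; reducing modulo $3$ shows the right side is $\equiv 2\pmod 3$ (as $h\ge 1$), which forces $i=0$ and hence the impossibility $2=3^h+5$. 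This disposes of $e\notin C_{1}$.

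For $|C_{e}|=m$, recall that $l_e=|C_e|$ divides $m$, so it suffices to exclude $l_e=d$ for every proper divisor $d\mid m$. Setting $S_d:=\frac{3^m-1}{3^d-1}=1+3^{d}+\cdots+3^{m-d}$, one has the chain $e\cdot 3^{d}\equiv e\pmod n\iff n\mid e(3^{d}-1)\iff S_d\mid e$, so the task reduces to showing $S_d\nmid e$ for all proper divisors $d$. Since $m/d$ is odd (because $m$ is odd), $S_d$ is a sum of an odd number of odd terms and is therefore odd; hence $S_d\mid e\iff S_d\mid 2e=3^h+5$. Thus everything comes down to proving that $S_d\nmid 3^h+5$ for each proper divisor $d\mid m$.

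The main obstacle is precisely this last divisibility, and I would attack it with two complementary size bounds. Using $3^m\equiv 1\pmod{S_d}$ I may assume $0\le h\le m-1$; then the hypothetical relation $3^h\equiv -5\pmod{S_d}$, after multiplication by $3^{m-h}$, also yields $S_d\mid 5\cdot 3^{m-h}+1$. If either $3^h+5<S_d$ or $5\cdot 3^{m-h}+1<S_d$, the respective divisibility is impossible, a positive integer below $S_d$ being no nonzero multiple of it. The crux is that these two conditions cannot both fail: assuming $3^h+5\ge S_d$ and $5\cdot 3^{m-h}+1\ge S_d$ gives $3^h\ge S_d-5$ and $3^{m-h}\ge (S_d-1)/5$, whence $5\cdot 3^{m}\ge (S_d-5)(S_d-1)$. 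Because every proper divisor of the odd integer $m$ is at most $m/3$, one has $S_d>3^{2m/3}-1$, and this quadratic lower bound contradicts $5\cdot 3^{m}\ge (S_d-5)(S_d-1)$ as soon as $m\ge 9$, which holds automatically whenever a proper divisor $d\ge 2$ exists; the remaining case $d=1$ is immediate since $3^h+5\le 3^{m-1}+5<\frac{3^m-1}{2}=S_1$. I expect this interplay between $S_d\mid 3^h+5$ and its reciprocal companion $S_d\mid 5\cdot 3^{m-h}+1$—needed exactly because a single size estimate fails for $h$ close to $m$—to be the only substantive point, the rest being routine bookkeeping.
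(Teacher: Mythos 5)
The paper does not actually prove this lemma---it is quoted from \cite{zha2020new} with a citation only---so there is no in-paper proof to compare against; judged on its own, your argument is essentially correct and is the natural way to prove such a statement. The reduction from $|C_e|=m$ to ``$S_d\nmid 2e=3^h+5$ for every proper divisor $d\mid m$'' via the oddness of $S_d=\frac{3^m-1}{3^d-1}$ is right, and the key idea---pairing the hypothetical divisibility $S_d\mid 3^h+5$ with its companion $S_d\mid 5\cdot 3^{m-h}+1$ so that the product of the two forced lower bounds yields $5\cdot 3^m\ge (S_d-5)(S_d-1)$, contradicting $S_d>3^{2m/3}$---is exactly what is needed, since a one-sided size estimate fails when $h$ is close to $m$. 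The case split $d=1$ versus $d\ge 3$ (whence $m\ge 9$) is also correctly organized.

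Two small loose ends should be tightened. First, your proof that $e\notin C_1$ treats $e$ as its own residue modulo $3^m-1$, which requires $1\le h\le m-1$ (or an explicit reduction of $e$ modulo $3^m-1$ first); this matches how the lemma is used in Theorems \ref{the1} and \ref{the2}, but the lemma as stated places no range on $h$, so you should either state the assumption or handle general $h$. Second, in the $d=1$ case the inequality $3^{h}+5\le 3^{m-1}+5<\frac{3^m-1}{2}$ is false for $m=3$ (it reads $14<13$), and since you reduced $h$ modulo $m$ the reduced exponent need not be odd, so $h\equiv 2$ genuinely occurs there; the case $m=3$ must be disposed of by the direct check $13\nmid 14$ (together with $13\nmid 6$ and $13\nmid 8$). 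Both are trivial repairs and do not affect the substance of the argument.
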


By the following bound of the Hamming distance of general linear codes, it can be  shown that 
cyclic code    ${\mathcal{C}_{(1,e,s)}}$ or    ${\mathcal{C}_{(0,1,e)}}$ with parameters $[3^m-1, 3^m-\frac{3m}{2}-2,4]$ is optimal. 
\begin{lemma}\cite{el2007bounds}
	\label{spherepackingtight}
	Let $\mathcal{A}_{p}(n,d)$ be the maximum number of codewords of a p-ary code with length n and Hamming distance at least $d$. If $p\ge 3$, $t=n-d+1$ and $r=\left\lfloor min\left\{ \frac{n-t}{2}, \frac{t-1}{p-2}\right\}\right\rfloor$, then$$\mathcal{A}_{p}(n,d)\le \frac{p^{t+2r}}{\sum_{i=0}^{r}\tbinom{t+2r}{i}(p-1)^{i}}$$
\end{lemma}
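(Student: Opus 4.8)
The plan is to recognize this inequality as a sphere-packing (Hamming-type) bound applied not to the code itself but to a suitable \emph{punctured} code, which is the standard device for sharpening the Singleton bound when $d$ is close to $n$. Let $\mathcal{C}$ be a $p$-ary code of length $n$ and minimum distance at least $d$ that attains $M:=\mathcal{A}_{p}(n,d)$ codewords. Writing $t=n-d+1$ gives $n-t=d-1$, so the hypothesis $r\le \frac{n-t}{2}$ is exactly $2r\le d-1$; in particular $t+2r\le n$, so it makes sense to reduce the length of $\mathcal{C}$ down to $n':=t+2r$ by deleting a fixed set $S$ of $n-n'=(d-1)-2r\ge 0$ coordinates.

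First I would control what puncturing on $S$ does to the pairwise distances. For any two distinct codewords $c_1,c_2\in\mathcal{C}$ we have $d_H(c_1,c_2)\ge d$, and deleting $|S|=(d-1)-2r$ coordinates can decrease their distance by at most $|S|$. Hence the punctured words satisfy $d_H(c_1',c_2')\ge d-\bigl((d-1)-2r\bigr)=2r+1\ge 1$. Two consequences follow at once: distinct codewords remain distinct after puncturing, so the punctured code $\mathcal{C}'\subseteq \F_p^{t+2r}$ still has exactly $M$ codewords; and its minimum distance is at least $2r+1$.

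Next I would apply the usual sphere-packing argument to $\mathcal{C}'$. Since the minimum distance of $\mathcal{C}'$ is at least $2r+1$, the Hamming balls of radius $r$ centered at the $M$ codewords of $\mathcal{C}'$ are pairwise disjoint. Each such ball contains $\sum_{i=0}^{r}\binom{t+2r}{i}(p-1)^i$ points of $\F_p^{t+2r}$, and all these balls lie inside the ambient space of size $p^{t+2r}$. Counting points therefore gives
\[
M\cdot \sum_{i=0}^{r}\binom{t+2r}{i}(p-1)^i\le p^{t+2r},
\]
which is exactly the claimed inequality after dividing, since $M=\mathcal{A}_{p}(n,d)$.

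Two remarks on the hypotheses complete the picture. The condition $p\ge 3$ is what makes the quantity $\frac{t-1}{p-2}$ well defined, while the term $\frac{t-1}{p-2}$ in the definition of $r$ plays no role in the \emph{validity} of the bound above: the argument goes through for every integer $r$ with $0\le r\le \frac{n-t}{2}$, and the specific $r=\lfloor\min\{\frac{n-t}{2},\frac{t-1}{p-2}\}\rfloor$ is admissible simply because it satisfies $r\le\frac{n-t}{2}$. Its purpose is to select the value of $r$ minimizing the right-hand side, i.e. the sharpest bound (note $r=0$ already recovers the Singleton bound $M\le p^{t}$); confirming that this is the optimal choice is a monotonicity computation on $B(r)=p^{t+2r}/\sum_{i=0}^{r}\binom{t+2r}{i}(p-1)^i$. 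The only genuinely delicate point is the bookkeeping in the puncturing step, where one must check simultaneously that no two codewords collapse (so that $M$ is preserved) and that the minimum distance drops by at most $|S|$; both are handled by the single estimate $d_H(c_1',c_2')\ge 2r+1$, after which the conclusion is a direct application of sphere packing.
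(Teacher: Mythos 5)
The paper does not prove this lemma at all --- it is imported verbatim from \cite{el2007bounds} and used as a black box to certify optimality of the $[3^m-1,3^m-\frac{3m}{2}-2,4]$ codes, so there is no in-paper argument to compare yours against. Your derivation is correct and is the standard route to this bound: puncture an optimal code on $(d-1)-2r$ coordinates (legitimate because $r\le\frac{n-t}{2}$ gives $2r\le d-1$), observe that pairwise distances drop by at most the number of deleted coordinates and hence stay $\ge 2r+1$ (so no two codewords collapse and the cardinality $M=\mathcal{A}_p(n,d)$ is preserved), and then apply the Hamming/sphere-packing bound with radius-$r$ balls in $\mathbb{F}_p^{t+2r}$. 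Your closing remark is also the right reading of the hypotheses: the inequality is valid for every integer $0\le r\le\frac{n-t}{2}$, and the $\frac{t-1}{p-2}$ term only serves to pick the $r$ that makes the right-hand side smallest (with $r=0$ recovering the Singleton bound $M\le p^t$); it is not needed for correctness. I see no gap.
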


Ding and Helleseth proved the following fundamental theorem about the optimality
of the ternary cyclic codes $\mathcal{C}_{(1,e)}$.

\begin{lemma}\label{thm-DH}{\cite{ding2013optimal}} %\rm(\cite{ding2013optimal})
Let $e\not\in C_1$ and $|C_e|=m$. The ternary cyclic code $\mathcal{C}_{(1,e)}$ has parameters $[3^m-1,3^m-1-2m,4]$ if and only if the following conditions are satisfied:
\begin{itemize}
     \item[1)] $e$ is even;
     \item[2)] $(x+1)^e+x^e+1=0$ has no solution in   $\mathbb{F}_{3^m}\setminus \F_3$; and
     \item[3)] $(x+1)^e-x^e-1=0$ has no solution  in $\mathbb{F}_{3^m}\setminus \F_3$.
\end{itemize}
\end{lemma}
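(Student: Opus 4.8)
The plan is to translate the minimum-distance requirement into the existence of low-weight codewords and then into the solvability of the two displayed equations. First I would record the dimension: since $\alpha$ is primitive, $|C_1|=m$, and by hypothesis $e\notin C_1$ with $|C_e|=m$, so $m_{\alpha}(x)$ and $m_{\alpha^e}(x)$ are distinct irreducible factors and the generator polynomial $g(x)=m_{\alpha}(x)m_{\alpha^e}(x)$ has degree $2m$. Hence $\mathcal{C}_{(1,e)}$ has length $3^m-1$ and dimension $3^m-1-2m$ unconditionally, and the whole problem reduces to showing that the minimum distance equals $4$ exactly when conditions 1)--3) hold. I would use the standard parity-check description: a nonzero codeword of Hamming weight $w$ exists if and only if there are pairwise distinct $x_1,\dots,x_w\in\mathbb{F}_{3^m}^{*}$ and coefficients $c_1,\dots,c_w\in\{1,-1\}$ with $\sum_i c_i x_i=0$ and $\sum_i c_i x_i^{e}=0$, because the columns of the check matrix are the pairs $(\alpha^{j},\alpha^{je})$.

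With this, weight $1$ is impossible, so $d\ge 2$ always. For weight $2$, the first equation forces $x_2=\pm x_1$; distinctness rules out $x_2=x_1$, leaving $x_2=-x_1$ with $c_1=c_2$, and then the second equation reads $c_1 x_1^{e}(1+(-1)^{e})=0$, which holds precisely when $e$ is odd. Thus a weight-$2$ codeword exists iff $e$ is odd, i.e. $d\ge 3$ iff $e$ is even, giving condition 1).

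Assuming $e$ even, I would classify weight-$3$ codewords by their sign pattern up to an overall sign and a permutation of indices, which leaves two cases. If all three coefficients are equal, the system is $x_1+x_2+x_3=0$, $x_1^{e}+x_2^{e}+x_3^{e}=0$; substituting $x_3=-(x_1+x_2)$ and using $(-1)^e=1$, then dividing by $x_2^{e}$ and putting $t=x_1/x_2$, this becomes $(t+1)^{e}+t^{e}+1=0$. If two coefficients coincide and one is opposite, say $x_1+x_2=x_3$ and $x_1^{e}+x_2^{e}=x_3^{e}$, the same substitution $t=x_1/x_2$ yields $(t+1)^{e}-t^{e}-1=0$. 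In both cases one checks that the constraints ``the $x_i$ are nonzero, distinct, and the third is determined by the first two'' translate exactly into $t\notin\{0,1,-1\}$, i.e. $t\in\mathbb{F}_{3^m}\setminus\F_3$. Hence a weight-$3$ codeword exists iff one of the equations in 2), 3) has a root in $\mathbb{F}_{3^m}\setminus\F_3$, so $d\ge 4$ iff both 2) and 3) hold. For the upper bound I would invoke the sphere-packing bound: a ternary code of length $n=3^m-1$ and size $3^{n-2m}$ cannot have minimum distance $5$, since that would force $2n^2+1\le(n+1)^2$, which fails for all $m\ge 2$; therefore $d\le 4$ always. Combining the two directions yields the claimed equivalence.

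The routine but delicate part is the weight-$3$ bookkeeping in the third paragraph: one must make sure the case split is exhaustive, that the normalization $t=x_i/x_j$ is legitimate (no division by zero), and above all that the dictionary between admissible triples $(x_1,x_2,x_3)$ and the excluded values $t\in\F_3$ is an exact bijection, so that the statement is genuinely ``if and only if'' rather than just one implication. The use of $e$ even in the all-equal case (to collapse $(-1)^e$ to $1$) is precisely what ties condition 1) to the appearance of the $+$ sign in equation 2), and I would make that dependence explicit.
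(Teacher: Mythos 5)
Your proof is correct: the dimension count, the reduction of weight-$2$ and weight-$3$ codewords to the parity of $e$ and to the two equations via the normalization $t=x_1/x_2$ (with the excluded values exactly $t\in\F_3$), and the sphere-packing argument for $d\le 4$ are all sound. Note that the paper itself states this lemma as a citation to Ding and Helleseth without proof; your argument is essentially the standard one given in that reference, so there is nothing to flag.
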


\begin{lemma}\cite{wang2022several}\cite{li2023some}\label{lastdance} 
	Let $e$ be a positive integer with $1\le e\le 3^m-1$ and $|C_e|=m$. The ternary cyclic code $C_{(2,e)}$ has parameters $[3^m-1, 3^m-2m-1, 4]$ if and only if the following three conditions are satisfied:
	\begin{enumerate}
		\item  [1)] $e$ is odd;
		\item  [2)]  the equation $(1+x^2)^e-(1+x^e)^2=0$ has no solution  in $\mathbb{F}_{3^m}\setminus\F_3$; and 
		\item  [3)]  the equation $(1+x^2)^e+(1+x^e)^2=0$ has no solution  in $\mathbb{F}_{3^m}\setminus\F_3$.
	\end{enumerate}
\end{lemma}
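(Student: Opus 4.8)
The plan is to pass to the parity-check matrix of $\mathcal{C}_{(2,e)}$ and study its low-weight codewords directly, proving the dimension formula together with the two inequalities $d\le 4$ and $d\ge 4$. Writing $n=3^m-1$, a parity-check matrix is
\[
H=\begin{pmatrix}1&\alpha^{2}&\alpha^{4}&\cdots&\alpha^{2(n-1)}\\ 1&\alpha^{e}&\alpha^{2e}&\cdots&\alpha^{e(n-1)}\end{pmatrix},
\]
where each $\F_{3^m}$-entry is expanded as a column over $\F_3$, so that a weight-$w$ codeword is the same as a set of $w$ columns that are $\F_3$-linearly dependent with all coefficients in $\{\pm1\}$. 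First I would fix the dimension: Lemma~\ref{ce}(1) gives $|C_2|=m$ since $\gcd(2,n)=2=p-1$, and $|C_e|=m$ is assumed; as every residue in $C_2$ is even, condition~(1) (that $e$ is odd) forces $C_2\neq C_e$, so $m_{\alpha^2}(x)$ and $m_{\alpha^e}(x)$ are distinct irreducibles of degree $m$ and the dimension equals $n-2m$. The bound $d\le 4$ then comes from Lemma~\ref{spherepackingtight}: if $d\ge5$, taking $t=n-4$ and $r=2$ yields $3^{\,n-2m}=|\mathcal{C}_{(2,e)}|\le 3^{n}/(1+2n^{2})$, hence $1+2n^{2}\le 3^{2m}$; but $1+2n^{2}=2\cdot3^{2m}-4\cdot3^{m}+3>3^{2m}$ for every $m\ge2$, a contradiction, so $d\le4$.

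It then remains to show that $d\ge 4$ holds precisely under conditions (1)--(3), i.e. that $\mathcal{C}_{(2,e)}$ has no nonzero codeword of weight $1,2,3$. Weight $1$ is clearly impossible. A weight-$2$ codeword amounts to distinct $x_1,x_2\in\F_{3^m}^{*}$ and $c_1,c_2\in\{\pm1\}$ with $c_1x_1^2+c_2x_2^2=0$ and $c_1x_1^e+c_2x_2^e=0$. The first relation gives $(x_1/x_2)^2=-c_1c_2\in\{\pm1\}$; I would split on whether $-c_1c_2$ equals $1$ or $-1$ (the latter requiring $m$ even and $-1$ a square), and in each branch substitute into the second relation. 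Using that $e$ is odd one checks that no such pair exists, whereas for $e$ even the choice $x_2=-x_1$, $c_2=-c_1$ does produce a weight-$2$ codeword. Hence $\mathcal{C}_{(2,e)}$ is free of weight-$2$ codewords exactly when condition~(1) holds.

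The heart of the argument is the weight-$3$ analysis. Such a codeword corresponds to pairwise distinct $x_1,x_2,x_3\in\F_{3^m}^{*}$ and $c_1,c_2,c_3\in\{\pm1\}$ with
\[
c_1x_1^{2}+c_2x_2^{2}+c_3x_3^{2}=0,\qquad c_1x_1^{e}+c_2x_2^{e}+c_3x_3^{e}=0.
\]
Dividing the two equations by $x_3^{2}$ and $x_3^{e}$ and relabelling, I may assume $x_3=1$; a short case-check (using $e$ odd) rules out $x_i=-x_j$, so the three squares are distinct and $1+c_1x_1^{2}\neq0$. Since at least two of $c_1,c_2,c_3$ share a sign, after a global sign change and relabelling I may further take $c_1=c_3=1$, leaving one free sign $b=c_2$; this is exactly what reduces all sign choices to the two cases $b=\pm1$. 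The first equation then gives $x_2^{2}=-b(1+x_1^{2})$, and writing $e=2k+1$ so that $x_2^{e}=x_2\,(x_2^{2})^{k}$, substitution into the second equation followed by squaring collapses the system to the single-variable identity
\[
(1+x_1^{e})^{2}=-b\,(1+x_1^{2})^{e},
\]
which is condition~(3) for $b=1$ and condition~(2) for $b=-1$; here $x_1\notin\F_3$ because $x_1\in\F_3$ would force $x_1\in\{0,\pm x_3\}$, all excluded. Conversely, from a root $x_0\in\F_{3^m}\setminus\F_3$ of (2) or (3), so that $(1+x_0^{e})^{2}=-b(1+x_0^{2})^{e}$ for the matching $b$, I would reconstruct a codeword: since $e$ is odd, $-b(1+x_0^{2})$ has the same quadratic character as its $e$-th power $(1+x_0^{e})^{2}$, hence is a nonzero square with a root $x_2$, and replacing $x_2$ by $-x_2$ if necessary gives $x_2^{e}=-b(1+x_0^{e})$; one then checks $x_0,x_2,1$ are distinct and nonzero, so $(x_0,x_2,1)$ with signs $(1,b,1)$ is a genuine weight-$3$ codeword. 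Therefore $\mathcal{C}_{(2,e)}$ has no weight-$3$ codeword if and only if (2) and (3) have no solution in $\F_{3^m}\setminus\F_3$.

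Assembling these facts proves the equivalence: conditions (1)--(3) together are equivalent to the absence of codewords of weight $\le 3$, i.e. to $d\ge4$, and combined with the dimension count and $d\le4$ they are equivalent to $\mathcal{C}_{(2,e)}$ having parameters $[3^m-1,3^m-2m-1,4]$. I expect the main obstacle to be the weight-$3$ reduction: carrying out the normalizations cleanly (the reduction to $x_3=1$ and $c_1=c_3=1$, which is what collapses all sign patterns onto the two equations (2) and (3)), disposing of the degenerate configurations ($x_i=\pm x_j$, locators lying in $\F_3$, and the possibility $1+x_0^{2}=0$ when $m$ is even), and making the converse construction rigorous through the quadratic-residue and square-root-sign argument for $x_2$.
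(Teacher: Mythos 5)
The paper does not prove this lemma: it is quoted verbatim from \cite{wang2022several} and \cite{li2023some} and used as a black box, so there is no in-paper proof to compare against. Judged on its own terms, your argument is correct and is the standard Ding--Helleseth-style proof one would expect those references to contain: the dimension count via $e$ odd $\Rightarrow e\notin C_2$ together with $|C_2|=|C_e|=m$; the upper bound $d\le 4$ from Lemma~\ref{spherepackingtight} with $t=n-4$, $r=2$ (your inequality $1+2n^2=2\cdot 3^{2m}-4\cdot 3^m+3>3^{2m}$, i.e.\ $(3^m-1)(3^m-3)>0$, is right for $m\ge 2$); the equivalence of condition (1) with the absence of weight-$2$ words; and the reduction of weight-$3$ words, after normalizing $x_3=1$ and $c_1=c_3=1$ by pigeonhole on signs, to $(1+x^e)^2=-b(1+x^2)^e$ with $b=c_2$, which is exactly conditions (2) and (3). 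The converse construction via the quadratic character (using that $z\mapsto z^e$ preserves squareness for odd $e$, and flipping the sign of the square root to match $y_2^e=-b(1+x_0^e)$) is also sound, and the excluded degeneracies all check out: $x_i=-x_j$ kills one of the two parity-check equations since $e$ is odd, and the worrisome case $1+x_0^2=0$ cannot occur because it would force $x_0^e=-1$ for an element $x_0$ of multiplicative order $4$, impossible for odd $e$. The only caveat is that several of these verifications are flagged as ``obstacles'' rather than carried out; they do all succeed, but a full write-up should include them, in particular the order-$4$ argument just mentioned and the check that the reconstructed triple $(x_0,y_2,1)$ is pairwise distinct.
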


\begin{longtable}{|c|c|c|c|}
	\caption{Known  optimal ternary cyclic codes $\mathcal{C}_{(1,v)}$ with parameters $[3^m-1,3^m-2m-1,4]$}
	\label{tab:1111}\\
	\hline
	Type & \begin{tabular}[c]{@{}c@{}}$v$ (even) \end{tabular} & Conditions & Ref. \\ \hline
	\endfirsthead
	%
%	\multicolumn{4}{c}%
%	{{\bfseries Table \thetable\ continued from previous page}} \\
%	\hline
%	Type & \begin{tabular}[c]{@{}c@{}}$v(even)$\end{tabular} & Conditions & Ref. \\ \hline
%	\endhead
	%
	1 & $(3^k+1)/2$ & $m\ge2$, $k$ is odd, $\gcd(m,k)=1$ & \cite{carlet2005linear} \\ \hline
	2 & $3^k+1$ & $m\ge2$, $m/\gcd(m,k)$ is odd & \cite{carlet2005linear} \\ \hline
	3 & $(3^m-3)/2$ & $m$ is odd, $m\ge5$ & \cite{ding2013optimal} \\ \hline
	4 & $(3^m+1)/4+(3^m-1)/2$ & $m$ is odd, $m\ge3$ & \cite{ding2013optimal} \\ \hline
	5 & $(3^{(m+1)/4}-1)(3^{(m+1)/2}+1)$ & $m\equiv3\ ({\rm{mod}}\ 4)$ & \cite{ding2013optimal} \\ \hline
	6 & $(3^{(m+1)/2}-1)/2$ or $(3^{m+1}-1)/8$ & $m\equiv3\ ({\rm{mod}}\ 4)$ & \cite{ding2013optimal} \\ \hline
	7 & $(3^{(m+1)/2}-1)/2+(3^m-1)/2$ & $m\equiv1\ ({\rm{mod}}\ 4)$ & \cite{ding2013optimal} \\ \hline
	8 & $(3^{m+1}-1)/8+(3^m-1)/2$ & $m\equiv1\ ({\rm{mod}}\ 4)$ & \cite{ding2013optimal} \\ \hline
	9 & $(3^h-1)/2$ & $m$ is odd, $h$ is even, $\gcd(m,h)=\gcd(m,h-1)=1$ & \cite{ding2013optimal} \\ \hline
	10 & $3^h-1$ & $\gcd(m,h)=\gcd(3^m-1,3^h-2)=1$ & \cite{ding2013optimal} \\ \hline
	11 & $2(3^{m-1}-1)$ or $5(3^{m-1}-1)$ or $16$ & $m$ is odd, $m\not\equiv0\ ({\rm{mod}}\ 3)$ & \cite{li2014optimal} \\ \hline
	12 & $(3^m-1)/2-2$, or $(3^m-1)/2+10$ & $m\equiv2\ ({\rm{mod}}\ 4)$ & \cite{li2014optimal} \\ \hline
	13 & $(3^m-1)/2-5$ or $(3^m-1)/2+7$ or $20$ & $m$ is odd & \cite{li2014optimal} \\ \hline
	14 & $2(3^h+1)$ & $m$ is odd & \cite{li2015conjecture} \\ \hline
	15 & $(3^m-3)/4$ & $m$ is odd & \cite{yan2019new} \\ \hline
	16 & $3^h+5$ & $m\equiv0\ ({\rm{mod}}\ 4)$, $m\ge4$, $h=m/2$ & \cite{han2019open} \\ \cline{3-3}
	&  & $m\equiv 2\ ({\rm{mod}}\ 4)$, $m\ge 6, h=(m+2)/2$ &  \\ \cline{3-3}
	&  & $m$ is odd, $\gcd(m,3)=1$, $2h\equiv \pm 1\ ({\rm{mod}}\ m)$ & \\ \cline{3-4}
	&  & $m\ge 5$ is prime,  $m\neq 19$,   $2h\equiv 3\ ({\rm{mod}}\ m) $  & \cite{ye2023ding} \\ \cline{3-3}
     &  & $m\ge 5$ is prime,  $m\equiv 2\ ({\rm{mod}}\ 3) $,  $3h\equiv 1\ ({\rm{mod}}\ $  $m) $   & \\ \hline
	17 & $v(3^s-1)\equiv\ 3^t-1({\rm{mod}}\ 3^m-1)$ & $\gcd(m,t)=\gcd(m,t-s)=1$ & \cite{wang2016several}\cite{zhao2022two} \\ \hline
	18 & $v(3^s+1)\equiv\ 3^t+1({\rm{mod}}\ 3^m-1)$ & $\gcd(m,t+s)=\gcd(m,t-s)=1$ & \cite{wang2016several}\cite{zhao2022two} \\ \hline
	19 & $v\equiv(3^m-1)/2+3^s+1\ ({\rm{mod}}\ 3^m-1)$ & $m$ is even, $m/\gcd(m,s)$ is odd & \cite{wang2016several} \\ \hline
	20 & $v\equiv(3^m-1)/2+3^s-1\ ({\rm{mod}}\ 3^m-1)$ & $m$ is even, $\gcd(m,s)=\gcd(3^m-1,3^s-2)=1$ & \cite{wang2016several}\cite{zhao2022two} \\ \hline
	21 & $(3^h+7)/2$ & $m$ is odd, $h$ is even, $1\le h\textless m$ & \cite{zha2020new} \\ \hline
	22 & $(3^h+7)/2+(3^m-1)/2$ & $m$ is odd, $h$ is odd, $1\le h\textless m$ & \cite{zha2020new} \\ \hline
	23 & $(3^{(m+1)/2}+5)/2$ & $m\equiv1\ ({\rm{mod}}\ 4)$, $m\not\equiv0\ ({\rm{mod}}\ 3)$ & \cite{zha2020new} \\ \hline
	24 & $3^h+13$ & $m$ is odd, $\gcd(m,3)=1$, $2h\equiv \pm 1\ ({\rm{mod}}\ m)$ & \cite{han2019open} \\ \hline
	25 & $(3^{(m+1)/2}+5)/2+(3^m-1)/2$ & $m\equiv3\ ({\rm{mod}}\ 4)$, $m\not\equiv0\ ({\rm{mod}}\ 3)$ & \cite{zha2020new} \\ \hline
	26 & $v(3^h+1)\equiv(3^m+1)/2\ ({\rm{mod}}\ 3^m-1)$ & $m$ is odd & \cite{zha2020new} \\ \hline
	27 & $5v\equiv2\ ({\rm{mod}}\ 3^m-1)$ & $m\not\equiv0\ ({\rm{mod}}\ 3)$ & \cite{zha2020new} \\ \hline
	28 & $7v\equiv2\ ({\rm{mod}}\ 3^m-1)$ & $m\not\equiv0\ ({\rm{mod}}\ 5)$, $\gcd(m,6)=1$ or $3$ & \cite{zha2020new} \\ \hline
	29 & $5v\equiv4\ ({\rm{mod}}\ 3^m-1)$ & $m\textgreater 2$, $m\not\equiv0\ ({\rm{mod}}\ 3)$, $m\not\equiv0\ ({\rm{mod}}\ 5)$ & \cite{zha2020new} \\ \hline
	30 & $5v\equiv 3^m-3\ ({\rm{mod}}\ 3^m-1)$ & $m\not\equiv0\ ({\rm{mod}}\ 3)$, $m\not\equiv0\ ({\rm{mod}}\ 4)$ & \cite{fan2022two} \\ \hline
	31 & $7v\equiv 3^m-3\ ({\rm{mod}}\ 3^m-1)$ & $m$ is odd, $m\not\equiv 0\ ({\rm{mod}}\ 3)$, $m\not\equiv 0\ ({\rm{mod}}\ 7)$ & \cite{fan2022two} \\ \hline
	32 & $5v\equiv 3^m-5\ ({\rm{mod}}\ 3^m-1)$ & $m$ is odd, $m\not\equiv 0\ ({\rm{mod}}\ 5)$ & \cite{fan2022two} \\ \hline
	%33 & $(3^{(m+1)/2}+9)/2$ & $m\equiv 1\ ({\rm{mod}}\ 4)$ & \cite{ye2023ding} \\ \hline
	%34 & $(3^{(m+1)/2}+9)/2+(3^m-1)/2$ & $m\equiv 3\ ({\rm{mod}}\ 4)$ & \cite{ye2023ding} \\ \hline
	33 & $(3^{(m-1)/2}+5)/2$ & $m \not\equiv 0 \pmod 3,$ $m\equiv 3\ ({\rm{mod}}\ 4)$ & \cite{ye2023ding} \\ \hline
	34 & $(3^{(m-1)/2}+5)/2+(3^m-1)/2$ & $m \not\equiv 0 \pmod 3,$ $m\equiv 1\ ({\rm{mod}}\ 4)$ & \cite{ye2023ding} \\ \hline
	35 & $(3^m-1)/2-k$ ($m>1$ is odd)  & $k=7$, or $k=11, -19$ and $m\not\equiv 0\ ({\rm{mod}}\ 9)$ & \cite{wang2022several} \\ \hline
	36 & $(3^m-1)/2+3^{s}+2$ ($m$ is odd)  & $\gcd(m,s)=1$, $x^{3^s+1}-x^2+1=0$ has no solution in  $\mathbb{F}_{3^m}$ & \cite{zhao2022two} \\ \hline
\end{longtable}

\begin{longtable}{|c|c|c|c|c|}
	\caption{Known optimal ternary cyclic codes $\mathcal{C}_{(u,v)}$  with parameters $[3^m-1,3^m-2m-1,4]$}
	\label{tab:man}\\
	\hline
	Type & \begin{tabular}[c]{@{}c@{}}$u$\end{tabular} & \begin{tabular}[c]{@{}c@{}}$v$\end{tabular} & Conditions & Ref. \\ \hline
	\endfirsthead
	\multicolumn{5}{c}%
	{{\bfseries Table \thetable\ continued from previous page}} \\
	\hline
	Type & \begin{tabular}[c]{@{}c@{}}Value of $u$ or \\ restrictions on $u$\end{tabular} & \begin{tabular}[c]{@{}c@{}}Value of $v$ or\\ restrictions on $v$\end{tabular} & Conditions & Ref. \\ \hline
	\endhead
	1 & $\frac{3^m+1}{2}$ & $(3^k+1)/2$ & $m$ is odd, $k$ is even, $\gcd(m,k)=1$ & \cite{zhou2014class} \\ \cline{3-5} 
	&  & $2\cdot 3^{(m-1)/2}+1$ & $m\ge 3$, $m$ is odd & \cite{fan2016class} \\ \cline{3-5} 
	&  & $3^s+2$ & \begin{tabular}[c]{@{}c@{}}$m$ is odd, $m\ge 3$, $4s\equiv 1\ ({\rm{mod}}\ m)$, $9\nmid m$\end{tabular} & \cite{yan2018family} \\ \cline{3-5} 
	&  & $(3^m-1)/2+e$ &\begin{tabular}[c]{@{}c@{}} $m$ is odd, $e$ is even, $\mathcal{C}_{(1,e)}$ has\\parameters $[3^m-1,3^m-2m-1,4]$ \end{tabular} & \cite{zha2021further} \\ \cline{3-5} 
	&  & $(3^{m+1}+7)/8$ & \begin{tabular}[c]{@{}c@{}}$m\ge 3$, $m\equiv 3\ ({\rm{mod}}\ 4)$, $9\nmid m$, $5\nmid m$\end{tabular} & \cite{qiu2023class} \\ \cline{3-5} 
	&  &  \begin{tabular}[c]{@{}c@{}} $3^h+2\cdot 3^i$, \\ $h>1$, \\  $m=2h-1$ \end{tabular}  & $m\not\equiv 0\ ({\rm{mod}}\ 3)$, $i=0$ & \cite{he2023two}  \\ \cline{4-4}
	&  &  & \begin{tabular}[c]{@{}c@{}}$m\not\equiv 0\ ({\rm{mod}}\ 3)$, $i=1$, $h\not\equiv 3\ ({\rm{mod}}\ 5)$\end{tabular} &  \\ \cline{4-4}
	&  &  & \begin{tabular}[c]{@{}c@{}}$m\not\equiv 0\ ({\rm{mod}}\ 3)$, $i=2$, $h\not\equiv 27\ ({\rm{mod}}\ 53)$\end{tabular} &  \\ \hline
	2 & $2^i$ & $(3^m-1)/2+2^i \cdot e$ & \begin{tabular}[c]{@{}c@{}}$m$ is odd, $e$ is even, and the ternary\\ cyclic code $\mathcal{C}_{(1,e)}$ is optimal\end{tabular} & \cite{zha2021further} \\ \hline
	3 & $u$ & $(3^k-1)u+(3^m-1)/2$ & \begin{tabular}[c]{@{}c@{}} $m$ is odd, $\gcd(u,3^m-1)=2$,\\  $k=1,2,3,(m+1)/2$ and $\gcd(m,k)=1$\end{tabular} & \cite{zha2021further} \\ \hline
	4 & $\frac{3^k+1}{2}$ & $(3^l+1)/2$ & \begin{tabular}[c]{@{}c@{}}$m$, $l$, and $\frac{m}{\gcd(m,l)}$  are all even,  \\  $\gcd(m,k+l)=\gcd(m,k-l)=1$\end{tabular} & \cite{zha2021further} \\ \hline
	5 &   \begin{tabular}[c]{@{}c@{}}  $3^m-6$, \\  $m$ is even \end{tabular}  & $(3^k+1)/2$, $k$ is odd &  \begin{tabular}[c]{@{}c@{}}$k=m-1$, $m\not\equiv 0({\rm{mod}}\ 6)$, $m\not\equiv 0({\rm{mod}}\ 20)$\end{tabular} & \cite{wang2022several} \\ \cline{4-4}
	&  &  & \begin{tabular}[c]{@{}c@{}}$k=1$, $m\not\equiv 0({\rm{mod}}\ 6)$, $m\not\equiv 0\ ({\rm{mod}}\ 20)$\end{tabular} &  \\ \cline{4-4}
	&  &  & \begin{tabular}[c]{@{}c@{}}$k=3$, $6\nmid m, 25\nmid m,$\\ $46\nmid m$, $78\nmid m$\end{tabular} &  \\ \hline
	6 & $2$ & $(3^m-1)/2+2(3^k-1)$ & \begin{tabular}[c]{@{}c@{}}$m,k$ are positive integers, $m$ is odd,\\ $\gcd(m,k)=\gcd(3^k-2,3^m-1)=1$\end{tabular} & \cite{liu2021two} \\ \cline{3-5} 
	&  & $(3^m-1)/2+2(3^k+1)$ & \begin{tabular}[c]{@{}c@{}}$m,k\in \mathbb{N}^{*}$, $m$ is odd, $\gcd(m,k)=1$\end{tabular} & \cite{liu2021two} \\ \cline{3-5} 
	&  & $(3^k+1)/2$ & \begin{tabular}[c]{@{}c@{}}$k$ is even, $2\le k\le m$\\ $\gcd(k+1,m)=\gcd(k-1,m)=1$\end{tabular} & \cite{fan2022two} \\ \cline{3-5}
	&  & $3v\equiv 5\ ({\rm{mod}}\ 3^m-1)$ & $m\not\equiv 0\ ({\rm{mod}}\ 3)$ & \cite{fan2022two} \\ \cline{3-5} 
	&  & $3v\equiv 7\ ({\rm{mod}}\ 3^m-1)$ & \begin{tabular}[c]{@{}c@{}}$m\not\equiv 0\ ({\rm{mod}}\ 5)$, $m\not\equiv 0\ ({\rm{mod}}\ 6)$\end{tabular} & \cite{fan2022two} \\ \cline{3-5} 
	&  & $3v\equiv 11\ ({\rm{mod}}\ 3^m-1)$ & \begin{tabular}[c]{@{}c@{}}$m\not\equiv 0\ ({\rm{mod}}\ 4)$, $m\not\equiv 0\ ({\rm{mod}}\ 9)$\end{tabular} & \cite{fan2022two} \\ \cline{3-5} 
	&  & $3v\equiv 13\ ({\rm{mod}}\ 3^m-1)$ & \begin{tabular}[c]{@{}c@{}}$m\not\equiv 0\ ({\rm{mod}}\ 3)$, $m\not\equiv 0\ ({\rm{mod}}\ 4)$\end{tabular} & \cite{fan2022two} \\ \cline{3-5} 
	&  & $(3^m+2\cdot 3^k+1)/2$ & $m$ is odd, $\gcd(m,k)=1$ & \cite{li2023some} \\ \cline{3-5} 
	&  & $(3^m-9)/2$ & $m$ is odd & \cite{li2023some} \\ \cline{3-5} 
	&  & $3^h+2$ & \begin{tabular}[c]{@{}c@{}}$h\in \mathbb{N}^{*}$, $2h\equiv 1\ ({\rm{mod}}\ m)$\end{tabular} & \cite{li2023some} \\ \cline{3-5} 
	&  & $\frac{3^m-1}{2}+2(2\cdot 3^k-1)$ & $m$ is odd, $k=(m-1)/2$ & \cite{li2023some} \\  \hline
   % &  & $(3^m-1)/2+2(3^h+2)$ & \begin{tabular}[c]{@{}c@{}}$m$ is odd,\\ $\gcd(m,h)=\gcd(3^h-2,3^m-1)=1$\end{tabular} & \cite{zhou2014class} \\ \hline
\end{longtable}

	\begin{longtable}{|c|c|c|c|}
		\caption{Known optimal cyclic codes with parameters $[p^m-1, p^m-\frac{3m}{2}-2, 4]$}
		\label{tab:2222}\\
		\hline
		Type  & $e$ & Conditions & Ref. \\ \hline
		\endfirsthead
		\multicolumn{4}{c}%
		{{\bfseries Table \thetable\ continued from previous page}} \\
		\hline
		Type of cyclic code & Value of $e$ or restriction on $e$ & Conditions & Ref. \\ \hline
		\endhead
		$\mathcal{C}_{(1,e,s)}$ & $e=\frac{p^m-1}{2}+1+p^{\frac{m}{2}}$ & $p$ is an odd prime & \cite{wu2023several} \\ \hline
		$\mathcal{C}_{(0,1,e)}$ & $e=1+p^{\frac{m}{2}}$ & $p$ is an odd  prime & \cite{wu2023several} \\ \hline 
	\end{longtable}

\begin{lemma}\cite{lidl1983encyclopedia}\label{factorlem}
	For every finite field $\mathbb{F}_{p^m}$ and every positive integer $r$,
	the product of all monic  irreducible polynomials over $\mathbb{F}_{p^m}$ whose degrees divide $r$ is equal to ${x}^{(p^m)^{r}}-x$.
\end{lemma}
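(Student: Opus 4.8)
Set $q = p^m$ for brevity, so that the claimed identity reads $x^{q^r} - x = \prod f$, where $f$ ranges over all monic irreducible polynomials over $\mathbb{F}_q$ whose degree divides $r$. The plan is to identify both sides through their roots in a fixed algebraic closure of $\mathbb{F}_q$. First I would record that $x^{q^r} - x$ is separable: its formal derivative is $q^r x^{q^r-1} - 1$, and since $q^r = p^{mr} \equiv 0 \pmod{p}$ this equals the nonzero constant $-1$ in characteristic $p$, which shares no root with $x^{q^r} - x$. Hence $x^{q^r} - x$ has $q^r$ distinct roots, and these are exactly the elements of the field $\mathbb{F}_{q^r}$. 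Consequently $x^{q^r} - x$ is monic and squarefree, so it factors as a product of \emph{distinct} monic irreducible polynomials over $\mathbb{F}_q$; the entire task reduces to showing that this set of irreducible factors is precisely the set of all monic irreducibles of degree dividing $r$.

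For the forward inclusion, let $f$ be a monic irreducible polynomial over $\mathbb{F}_q$ of degree $d$ with $d \mid r$, and let $\beta$ be a root of $f$ in the algebraic closure. Then $\mathbb{F}_q(\beta) = \mathbb{F}_{q^d}$, and the subfield criterion for finite fields, namely $\mathbb{F}_{q^d} \subseteq \mathbb{F}_{q^r} \iff d \mid r$, gives $\beta \in \mathbb{F}_{q^r}$. Thus $\beta$ is a root of $x^{q^r} - x$, and since $f$ is the minimal polynomial of $\beta$ over $\mathbb{F}_q$, we conclude $f \mid x^{q^r} - x$.

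For the reverse inclusion, let $f$ be any monic irreducible factor of $x^{q^r} - x$, say $\deg f = d$, with root $\beta$. Since $\beta$ is a root of $x^{q^r} - x$ it lies in $\mathbb{F}_{q^r}$; combined with $\mathbb{F}_q(\beta) = \mathbb{F}_{q^d}$ this forces $\mathbb{F}_{q^d} \subseteq \mathbb{F}_{q^r}$, whence $d \mid r$ by the same criterion. Putting the two inclusions together, the monic irreducible factors of $x^{q^r} - x$ are exactly the monic irreducibles of degree dividing $r$, each occurring with multiplicity one because $x^{q^r} - x$ is squarefree. Since both $x^{q^r} - x$ and the product $\prod f$ are monic and share this identical multiset of irreducible factors, they are equal.

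The argument is entirely classical, so I do not expect a genuine obstacle; the only two points needing care are the separability check (to guarantee squarefreeness, so that the factorization is a product of \emph{distinct} irreducibles with no repetition) and the correct invocation of the subfield lattice of finite fields, i.e. that $\mathbb{F}_{q^d}$ embeds into $\mathbb{F}_{q^r}$ exactly when $d \mid r$.
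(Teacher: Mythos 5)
Your proof is correct. The paper gives no proof of this lemma --- it is quoted verbatim from Lidl and Niederreiter with a citation --- and your argument (separability of $x^{q^r}-x$ via the derivative, identification of its roots with $\mathbb{F}_{q^r}$, and the two inclusions through the subfield criterion $\mathbb{F}_{q^d}\subseteq\mathbb{F}_{q^r}\iff d\mid r$) is precisely the standard proof found in that reference, so there is nothing to reconcile.
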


\begin{lemma}\cite{lidl1983encyclopedia}   \label{rootlem}
	Let  $f(x)$ be an irreducible
	polynomial over $\F_{p^m}$ of degree $r$.
	Then $f(x) = 0$ has a root $x$ in $\F_{p^{mr}}$. Furthermore, all the
	roots of $f(x) = 0$ are simple and are given by the $r$  distinct elements $x, x^{p^m}, x^{p^{2m}}, \cdots ,
	x^{p^{m(r-1)}}$ of $\F_{p^{mr}}$.
\end{lemma}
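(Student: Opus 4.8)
The plan is to derive all three assertions from the explicit factorization recorded in Lemma~\ref{factorlem} together with the action of the Frobenius map $y \mapsto y^{p^m}$. Throughout set $q = p^m$, so that $f$ is irreducible of degree $r$ over $\F_q$ and the roots to be exhibited are $x, x^{q}, \dots, x^{q^{r-1}}$.

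First I would settle existence and simplicity of the roots simultaneously. Applying Lemma~\ref{factorlem} with base field $\F_{p^m}$ and exponent $r$, the product of all monic irreducible polynomials over $\F_{q}$ whose degrees divide $r$ equals $x^{q^{r}} - x$. Since $\deg f = r$ divides $r$, the polynomial $f$ occurs among these factors, so $f(x) \mid x^{q^{r}} - x$. The roots of $x^{q^{r}} - x$ are precisely the $q^{r}$ elements of $\F_{q^{r}}$; hence $f$ splits over $\F_{q^{r}}$, and in particular it has a root $x_{0} \in \F_{q^{r}}$. Simplicity then comes for free: the formal derivative of $x^{q^{r}} - x$ is $-1$, so $x^{q^{r}} - x$ is squarefree, and therefore so is its divisor $f$.

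Next I would show that the Frobenius conjugates of $x_{0}$ are again roots. Writing $f(x) = \sum_{i=0}^{r} a_{i} x^{i}$ with $a_{i} \in \F_{q}$ and using $a_{i}^{q} = a_{i}$, one gets $f(x_{0}^{q}) = \sum_{i} a_{i} x_{0}^{iq} = \sum_{i} (a_{i} x_{0}^{i})^{q} = f(x_{0})^{q} = 0$; iterating, every element of the list $x_{0}, x_{0}^{q}, \dots, x_{0}^{q^{r-1}}$ is a root of $f$.

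The main obstacle, and the only step needing genuine care, is to verify that these $r$ conjugates are pairwise distinct---after which a degree count finishes the proof, since $f$ can have at most $r$ roots. Here I would argue by contradiction. If $x_{0}^{q^{i}} = x_{0}^{q^{j}}$ for some $0 \le i < j \le r-1$, then, since $y \mapsto y^{q}$ is an automorphism of $\F_{q^{r}}$, applying the inverse of $y \mapsto y^{q^{i}}$ to both sides gives $x_{0}^{q^{d}} = x_{0}$ with $d = j - i$ satisfying $1 \le d \le r-1$. This places $x_{0}$ in the fixed field of $y \mapsto y^{q^{d}}$, namely $\F_{q^{d}}$. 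But $f$, being irreducible with root $x_{0}$, is the minimal polynomial of $x_{0}$ over $\F_{q}$, so $[\F_{q}(x_{0}) : \F_{q}] = r$; the inclusion $x_{0} \in \F_{q^{d}}$ then forces $r \mid d$, contradicting $1 \le d \le r-1$. Therefore the $r$ conjugates are distinct, and they exhaust the roots of $f$.
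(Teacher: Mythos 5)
Your proof is correct. Note that the paper offers no proof of this lemma at all---it is quoted verbatim from Lidl and Niederreiter---so there is nothing internal to compare against; your argument is essentially the standard textbook one (deduce $f \mid x^{q^{r}}-x$ from the factorization lemma to get existence and simplicity, use Frobenius-invariance of the coefficients to generate conjugate roots, and rule out coincidences among the conjugates via the degree of $\F_{q}(x_{0})$ over $\F_{q}$). The only cosmetic gap is that Lemma~\ref{factorlem} speaks of \emph{monic} irreducible polynomials while the statement does not assume $f$ monic; you should normalize $f$ by its leading coefficient at the outset, which changes neither its roots nor their multiplicities.
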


\section{Four   classes of  optimal ternary cyclic codes with parameters $[3^m-1, 3^m-\frac{3m}{2}-2, 4]$}\label{Section3}
In this section, 
we will
propose four classes of optimal ternary   cyclic codes ${\mathcal{C}_{(0,1,e)}}$ or  ${\mathcal{C}_{(1,e,s)}}$  with parameters $[3^m-1, 3^m-\frac{3m}{2}-2, 4]$
by analyzing the solutions of certain equations over $\F_{3^m}$.

\subsection{The first  two classes  of  optimal ternary cyclic codes with parameters $[3^m-1, 3^m-\frac{3m}{2}-2, 4]$}

Let $e=2\cdot 3^{m-1}-3^{\frac{m}{2}-1}-1$, where $m$ is an even integer. 
In this subsection, we will show that
${\mathcal{C}_{(0,1,e)}}$ is an  optimal ternary cyclic code with parameters  $[3^m-1, 3^m-\frac{3m}{2}-2,4]$.

\begin{theorem}\label{p3-thm}
	Let  $m$ be an  even integer and  $e=2\cdot 3^{m-1}-3^{\frac{m}{2}-1}-1$.  Then $\mathcal{C}_{(0,1,e)}$ is an optimal ternary  cyclic code with parameters $[3^m-1, 3^m-\frac{3m}{2}-2,4]$.
\end{theorem}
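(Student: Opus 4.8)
The plan is to verify the three parameters separately: the length $n=3^m-1$ is automatic, the dimension follows from a cyclotomic-coset count, and the minimum distance is pinned down by proving $d\ge 4$ (no nonzero codeword of weight $\le 3$) and then invoking the sphere-packing-type bound of Lemma~\ref{spherepackingtight} to force $d\le 4$, whence $d=4$ and optimality.

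For the dimension I would set $q=3^{m/2}$ and work in $\F_{3^m}=\F_{q^2}$. By Lemma~\ref{ce} we have $|C_1|=m$. The key computation for $C_e$ is $3e\equiv -(q+1)\pmod{q^2-1}$, read off from $3e=2\cdot 3^m-3^{m/2}-3$ and $3^m\equiv 1$. Since $\gcd(3,q^2-1)=1$, this gives $\gcd(e,q^2-1)=\gcd(3e,q^2-1)=q+1$: indeed $\alpha^{3e}=N(\alpha)^{-1}$ has order $q-1$, where $N(z)=z^{q+1}$ is the norm of $\F_{q^2}/\F_q$. Hence $\alpha^e$ has order exactly $q-1$, so it generates $\F_q^*$, giving $|C_e|=m/2$; the three cosets $C_0,C_1,C_e$ are then pairwise distinct, the generator polynomial has degree $1+m+\tfrac m2=\tfrac{3m}{2}+1$, and the dimension is $3^m-\tfrac{3m}{2}-2$.

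For $d\ge 4$, weight-one and weight-two codewords are excluded immediately by the zeros $\alpha^0,\alpha^1$: a weight-one word forces a zero coefficient, and a weight-two word forces two equal coordinates. A weight-three codeword is the real content. Because $\alpha^0$ forces the three nonzero coefficients to sum to zero in $\F_3$, they must all be equal, so after rescaling the problem becomes a solution $x\in\F_{3^m}\setminus\F_3$ of $(x+1)^e+x^e+1=0$ (I would first record that $e$ is even, which is why the sign comes out this way). The crucial device is to \emph{cube} this equation: cubing is injective in characteristic $3$ and $3e\equiv-(q+1)$, so the equation is equivalent to $N(x+1)^{-1}+N(x)^{-1}+1=0$. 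Writing $a=N(x)$, $T=\mathrm{Tr}_{\F_{q^2}/\F_q}(x)\in\F_q$ and using $N(x+1)=(x+1)(x^q+1)=a+T+1$, clearing denominators collapses everything to $a^2+aT+T+1=0$.

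It then remains to show $a^2+aT+T+1=0$ (with $a=N(x),\,T=\mathrm{Tr}(x)$) has no solution with $x\notin\F_3$. If $x\in\F_q$ then $a=x^2$, $T=-x$, and the relation reads $x^4-x^3-x+1=(x-1)^4=0$, so $x=1\in\F_3$. If $x\in\F_{q^2}\setminus\F_q$ then $X^2-TX+a$ is the irreducible minimal polynomial of $x$, so its discriminant $T^2-a$ must be a nonsquare in $\F_q$; but $a\ne-1$ (else the relation gives $2\ne0$), so $T=-(a^2+1)/(a+1)$ and $T^2-a=(a-1)^2(a^2+a+1)/(a+1)^2$, which the characteristic-$3$ identity $a^2+a+1=(a-1)^2$ turns into a perfect square, a contradiction. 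Hence there is no weight-three codeword and $d\ge 4$. The main obstacle throughout is taming the large exponent $e$; the cube trick is exactly what converts $x^e$ into the norm $x^{q+1}$, and the identity $z^2+z+1=(z-1)^2$ is what forces both cases to degenerate. A final application of Lemma~\ref{spherepackingtight} (taking $t=n-d+1$ and ruling out $d=5$ for a ternary code of this length and dimension) gives $d=4$ and optimality.
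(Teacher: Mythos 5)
Your proposal is correct, and its skeleton coincides with the paper's: the coset-size computation for the dimension, the exclusion of weights $1$ and $2$ via the zeros $\alpha^0,\alpha^1$, the reduction of weight $3$ (using that $e$ is even) to $(x+1)^e+x^e+1=0$ with $x\notin\F_3$, and above all the key cubing trick that exploits $3e\equiv-(q+1)\pmod{3^m-1}$, $q=3^{m/2}$, to turn $x^e$ into the inverse norm $x^{-(q+1)}$. Where you genuinely diverge is the endgame. The paper clears denominators into a polynomial of degree $2(q+1)$ and factors it by hand as $\bigl(x^{q+1}+x^q+1\bigr)\bigl(x^{q+1}+x+1\bigr)=0$, then shows each factor forces $x\in\F_{3^{m/2}}$ (by comparing the equation with its $q$-th power) and hence $x^2+x+1=(x-1)^2=0$. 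You instead pass to norm--trace coordinates $a=N(x)$, $T=\Tr_{\F_{3^m}/\F_{3^{m/2}}}(x)$, collapse everything to the single relation $a^2+aT+T+1=0$, and split on whether $x\in\F_{3^{m/2}}$: if so the relation becomes $(x-1)^4=0$ directly, and if not the discriminant $T^2-a$ of the minimal polynomial $X^2-TX+a$ must be a nonsquare of $\F_{3^{m/2}}$, whereas your identity $a^4-a^3-a+1=(a-1)^4$ (the characteristic-$3$ fact $z^2+z+1=(z-1)^2$) exhibits it as $(a-1)^4/(a+1)^2$, a square --- a contradiction. Both routes are elementary and of comparable length; yours makes transparent \emph{why} the equation degenerates (everything collapses to powers of $a-1$), while the paper's explicit factorization avoids invoking the discriminant criterion for irreducibility of quadratics. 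Your added justifications that $e$ is even and that $\gcd(e,3^m-1)=q+1$ (hence $|C_e|=m/2$) supply details the paper states without proof.
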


{\em Proof:}
Note that $3e\equiv -1-3^{\frac{m}{2}}\pmod {3^m-1}$. It is obviously that 
$e\notin {{C}_{1}}$ and $|C_e|=\frac{m}{2}$.
Thus  the dimension of ${\mathcal{C}_{(0,1,e)}}$ is ${{3}^{m}}-\frac{3m}{2}-2$.

In the following we show that
${\mathcal{C}_{(0,1,e)}}$ does not have a nonzero  codeword of Hamming weight less than 4.
Clearly,  the minimum distance of $\mathcal{C}_{(0,1,e)}$ cannot  be 1.
Suppose that   ${\mathcal{C}_{(0,1,e)}}$ has a codeword of Hamming weight 2, then  there exist two elements ${{c}_{1}}$,
${{c}_{2}}\in {{\mathbb{F}}_{3}^*}$ and two distinct elements ${{x}_{1}} $,
${{x}_{2}}\in {{\mathbb{F}}_{3^m}^*}$ such that
\begin{equation}\label{weight2equ}
	\left\{ \begin{array}{{l}}
		c_1+c_2=\ 0\\
		c_1x_1+c_2x_2=\ 0\\
		c_1{x_1^e}+c_2{x_2^e}=\ 0.
	\end{array} \right.
\end{equation}
By the first equation of (\ref{weight2equ}), 
we have ${c}_{1}=-{c}_{2}$, thus  ${x}_{1}={x}_{2}$ by the second equation of  (\ref{weight2equ}), which is contrary to ${x}_{1}\ne {x}_{2}$.
Therefore,  $\mathcal{C}_{(0,1,e)}$ does not have  a codeword of Hamming weight 2.

$\mathcal{C}_{(0,1,e)}$ has a codeword of Hamming weight 3 if and only if there exist three elements ${c}_{1},{c}_{2},{c}_{3}\in \mathbb{F}_{3}^*$ and three distinct elements ${x}_{1},{x}_{2,},{x}_{3}\in \mathbb{F}_{3^m}^*$ such that
\begin{equation}\label{pm-21}
	\left\{ \begin{array}{l}
		{c}_{1}+{c}_{2}+{c}_{3}=0.\\
		{c}_{1}{x}_{1}+{c}_{2}{x}_{2}+{c}_{3}{x}_{3}=0  \\
		{c}_{1}x_{1}^{e}+{c}_{2}x_{2}^{e}+{c}_{3}x_{3}^{e}=0.		
	\end{array} \right.
\end{equation}
By the first equation of (\ref{pm-21}), we have $(c_1,c_2,c_3)=(1,1,1)$ or $(c_1,c_2,c_3)=(-1,-1,-1)$.   
Due to symmetry, it is sufficient to  consider the case  $(c_1,c_2,c_3)=(1,1,1)$. In this case, let $x=\frac{x_1}{x_3}$
and  $y=\frac{x_2}{x_3}$, then $x\ne 1, $   $y\ne 1 $, and  (\ref{pm-21}) becomes 

\begin{equation}\label{pm-22}
	\left\{ \begin{array}{l}
		x+y+1=0\\
		x^e+y^e+1=0, 
	\end{array} \right.
\end{equation}
which is equivalent to $(x+1)^e+x^e+1=0$. 
Raising both sides of this equation to the power of 3 will lead to 
$(x+1)^{3e}+x^{3e}+1=0$, i.e.,  
$$x^{2 (3^{\frac{m}{2}}+1)}+x^{2\cdot 3^{\frac{m}{2}}+1}+x^{3^{\frac{m}{2}}+2}+x^{3^{\frac{m}{2}}}+x+1=0.$$
Notice that the above equation can be factorized as 
 $$\big((x^{3^{\frac{m}{2}}}-1)(x-1)-(x^{3^{\frac{m}{2}}}-x)\big)\big((x^{3^{\frac{m}{2}}}-1)(x-1)+(x^{3^{\frac{m}{2}}}-x)\big)=0.$$
As a result, $(x^{3^{\frac{m}{2}}}-1)(x-1)-(x^{3^{\frac{m}{2}}}-x)=0$ or $(x^{3^{\frac{m}{2}}}-1)(x-1)+(x^{3^{\frac{m}{2}}}-x)=0$.

Case 1),
$(x^{3^{\frac{m}{2}}}-1)(x-1)-(x^{3^{\frac{m}{2}}}-x)=0$.  
In this case, 
\begin{equation}\label{firsteq1}
	x^{3^{\frac{m}{2}}+1}+x^{3^{\frac{m}{2}}}+1=0.
\end{equation}
Taking 
$3^{\frac{m}{2}}$-th power of both sides of the above equation will lead to 
$x^{3^{\frac{m}{2}}+1}+x+1=0$, togehter with (\ref{firsteq1}), we have 
$x^{3^{\frac{m}{2}}}=x$, i.e.,  $x\in\F_{3^\frac{m}{2}}$. Thus, 
(\ref{firsteq1}) can be simplied to $x^2+x+1=0$, i.e.,  $x=1$, 
which is contrary to $x\ne 1$.  

Case 2),
$(x^{3^{\frac{m}{2}}}-1)(x-1)+(x^{3^{\frac{m}{2}}}-x)=0$.  
In this case,  $x^{3^{\frac{m}{2}}+1}+x+1=0.$ Similar to case 1), we can show that 
$x=1$,  which is contrary to $x\notin \{0,1\}$.

To sum up,   $\mathcal{C}_{(0,1,e)}$ does not have  a codeword of  Hamming weight 3.
This completes the proof.
\done

\begin{example}
	Let $p=3$ and  $m=4$.  Then  $e=2\cdot 3^{m-1}-3^{\frac{m}{2}-1}-1=50$.  Let $\alpha $ be the generator  of   $\mathbb{F}_{{3}^{4}}^*$
	with
	${\alpha}^{4}+2 {\alpha}^{3}+2=0$. Then the  code $\mathcal{C}_{(0,1,e)}$
	has  parameters $[80,73,4]$ and
	generator polynomial  $x^7+2x^6+x^5+x^3+2x+2$.
\end{example}

Similar as the proof of Theorem \ref{p3-thm}, we can obtain another class of  optimal ternary  cyclic codes  with parameters $[3^m-1, 3^m-\frac{3m}{2}-2,4]$. 
\begin{corollary}\label{coro-1}
	Let  $m$ be an  even integer and  $e= \frac{3^{m}-1}{2}-3^{\frac{m}{2}}-1$. Let $s=\frac{3^{m}-1}{2}$.
	Then $\mathcal{C}_{(1,e,s)}$ is an optimal ternary  cyclic code with parameters $[3^m-1, 3^m-\frac{3m}{2}-2,4]$.
\end{corollary}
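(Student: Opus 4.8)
The plan is to follow the template of Theorem \ref{p3-thm}: first determine the dimension, then successively rule out codewords of Hamming weight $1,2,3$, and finally conclude optimality from the sphere-packing bound of Lemma \ref{spherepackingtight}. For the dimension, observe that $\alpha^{s}=-1$ is the unique element of order $2$, so $C_{s}=\{s\}$ and $m_{\alpha^{s}}(x)=x+1$ has degree $1$. Writing $e=s-(3^{\frac m2}+1)$ and using $3^{\frac m2}s\equiv s$, a short computation gives $3^{\frac m2}e\equiv e\pmod{3^m-1}$, so $|C_e|$ divides $\frac m2$; since $e\in C_1$ would force $|C_e|=m$, we get $e\notin C_1$, and a direct computation of the coset of $e$ (exactly as in Theorem \ref{p3-thm}) yields $|C_e|=\frac m2$. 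Hence $\dim\mathcal{C}_{(1,e,s)}=3^m-1-m-\tfrac m2-1=3^m-\tfrac{3m}{2}-2$.

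Weight $1$ is immediate. For weight $2$ we would need distinct $x_1,x_2\in\F_{3^m}^*$ and $c_1,c_2\in\F_3^*$ with $c_1x_1+c_2x_2=0$ and $c_1x_1^{s}+c_2x_2^{s}=0$. The first equation forces $c_1=c_2$ and $x_2=-x_1$; substituting into the second gives $c_1x_1^{s}\big(1+(-1)^{s}\big)=0$, which is impossible because $s=\frac{3^m-1}{2}$ is even (as $m$ is even), so $1+(-1)^{s}=2\neq0$. Thus there is no codeword of weight $2$.

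The weight-$3$ case is the heart of the argument and the step I expect to be the main obstacle. Here I would let the third syndrome equation $c_1x_1^{s}+c_2x_2^{s}+c_3x_3^{s}=0$ play the role that $c_1+c_2+c_3=0$ played in Theorem \ref{p3-thm}: since each $x_i^{s}\in\{1,-1\}$, the three terms $c_ix_i^{s}\in\{1,-1\}\subset\F_3$ sum to zero only when they are all equal, which forces $c_i=\delta\,x_i^{s}$ for a common $\delta\in\{1,-1\}$; scaling the codeword by $\delta^{-1}$ I may assume $c_i=x_i^{s}$. Setting $z_i=x_i^{s}x_i=x_i^{(3^m+1)/2}$, the remaining two syndrome equations become $z_1+z_2+z_3=0$ and, using $e=s-(3^{\frac m2}+1)$ together with $(x_i^{s})^{3^{\frac m2}+1}=1$, also $z_1^{-(3^{\frac m2}+1)}+z_2^{-(3^{\frac m2}+1)}+z_3^{-(3^{\frac m2}+1)}=0$. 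A short check shows the $z_i$ are then pairwise distinct. Putting $a=z_1/z_3$ (so that $z_2/z_3=-(a+1)$ and $a\notin\F_3$) and clearing denominators, the second equation becomes
$$a^{2(3^{\frac m2}+1)}+a^{2\cdot 3^{\frac m2}+1}+a^{3^{\frac m2}+2}+a^{3^{\frac m2}}+a+1=0,$$
which is precisely the equation obtained in the proof of Theorem \ref{p3-thm} after cubing. Quoting the factorization and the two-case analysis carried out there, the only solution is $a=1$, contradicting $a\notin\F_3$. Hence $\mathcal{C}_{(1,e,s)}$ has no codeword of weight $3$.

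Having excluded weights $1,2,3$, the minimum distance is at least $4$, and since the parameters $[3^m-1,3^m-\tfrac{3m}{2}-2,4]$ are optimal by Lemma \ref{spherepackingtight} (as remarked before that lemma), the code is optimal with $d=4$. The two genuinely new ingredients relative to Theorem \ref{p3-thm} are the use of the quadratic-character constraint coming from the zero $\alpha^{s}$ to normalize the coefficients $c_i$, and the substitution $z_i=x_i^{(3^m+1)/2}$ that converts the three syndrome equations into the very polynomial identity already resolved in Theorem \ref{p3-thm}; the care needed lies in verifying this substitution is legitimate, in particular that the distinctness of the $x_i$ is inherited by the $z_i$.
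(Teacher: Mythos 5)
Your proposal is correct and is essentially the argument the paper intends: the paper proves this corollary only by the remark ``similar as the proof of Theorem~\ref{p3-thm}'', and your reduction --- using the syndrome at $\alpha^{s}$ to normalize $c_i=x_i^{s}$, substituting $z_i=x_i^{(3^m+1)/2}$ (a bijection since $\gcd(\tfrac{3^m+1}{2},3^m-1)=1$, which gives the needed distinctness of the $z_i$), and landing on the identical equation $a^{2(3^{m/2}+1)}+a^{2\cdot 3^{m/2}+1}+a^{3^{m/2}+2}+a^{3^{m/2}}+a+1=0$ already resolved there --- is exactly the intended adaptation, with the details correctly verified (e.g.\ $(s+1)(3^{m/2}+1)\equiv 3^{m/2}+1\pmod{3^m-1}$ and $s+e\equiv-(3^{m/2}+1)$).
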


\begin{example}
	Let $p=3$ and  $m=6$.  Then  $e= \frac{3^{m}-1}{2}-3^{\frac{m}{2}}-1=336$.  Let $\alpha $ be the generator  of   $\mathbb{F}_{{3}^{6}}^*$
	with 
	$\alpha^6+2\alpha^4+\alpha^2+2\alpha+2=0$. Then the  code $\mathcal{C}_{(1,e,s)}$
	has  parameters $[728,718,4]$ and
	generator polynomial $x^{10} + 2x^9 + 2x^6 + 2x^5 + 2x^4 + 2x^3 + 2x^2 + 2x + 1$.
\end{example}

\begin{remark}
It can be shown  that  the cyclic code  $\mathcal{C}_{(0,1,e)}$ constructed  in Theorem \ref{p3-thm} are not covered by the known ones in  Table \ref{tab:2222}.
 In fact, suppose that  $e=2\cdot 3^{m-1}-3^{\frac{m}{2}-1}-1$ and $3^{\frac{m}{2}}+1$
are in the same cyclotomic coset. 
Note that  $3e\equiv  2\cdot 3^{m}-3^{\frac{m}{2}}-3\equiv -3^{\frac{m}{2}}-1 \pmod{3^m-1}$. Thus, 
   there exists  an integer $1\le i\le m-1 $ such that $3e\equiv -3^{\frac{m}{2}}-1\equiv (3^{\frac{m}{2}}+1)\cdot 3^i\pmod{3^m-1}$,  i.e.,  
$(3^{\frac{m}{2}}+1)(3^i+1)\equiv 0 \pmod {3^m-1}$, which implies  $(3^{\frac{m}{2}}-1)|(3^i+1)$.  
It is known that 
		$$\gcd(3^{\frac{m}{2}}-1,3^i+1)=\begin{cases}
			2,\ {\rm if\, } \frac{m/2}{\gcd(i,m/2)} {\rm\,  is \, odd}\\
			3^{\gcd(i,m/2)}+1,\ {\rm if\,} \frac{m/2}{\gcd(i,m/2)} {\rm\, is \, even}.
		\end{cases}$$

If $m\equiv 2 \pmod 4,$ then $\gcd(3^{\frac{m}{2}}-1,3^i+1)=2$, i.e.,   $(3^{\frac{m}{2}}-1)\nmid (3^i+1)$. 
If $m\equiv 0 \pmod 4,$ then 
we have $ \gcd(3^{\frac{m}{2}}-1,3^i+1) \le  3^{\frac{m}{4}}+1$, one get $(3^{\frac{m}{2}}-1)\nmid (3^i+1)$ again. 
As a result, the cyclic code  $\mathcal{C}_{(0,1,e)}$  given in Theorem \ref{p3-thm}
is inequivalent to the known ones. Simiarly, we can show that the optimal code  $\mathcal{C}_{(1,e,s)}$ presented in Corollary \ref{coro-1} is  not covered by the known ones in  Table \ref{tab:2222}.

\end{remark}

\subsection{The second two  classes  of  optimal ternary cyclic codes with parameters $[3^m-1, 3^m-\frac{3m}{2}-2, 4]$}

%In this subsection, we will present a class of optimal $p$-ary cyclic  codes by using the exponent $e$ of the form $e=\frac{3^m-1}{2}+3^{\frac{m}{2}}+1$, where  $m\equiv 0\pmod 4$.  

\begin{theorem}\label{p3-thm2}
	Let $m\ge 4$ be an integer with $m\equiv 0\pmod 4$. Let   $e=\frac{3^m-1}{2}+3^{\frac{m}{2}}+1$.  Then $\mathcal{C}_{(0,1,e)}$ is an optimal ternary  cyclic code with parameters $[3^m-1, 3^m-\frac{3m}{2}-2,4]$.
\end{theorem}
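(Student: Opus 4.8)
The plan is to mirror the proof of Theorem \ref{p3-thm}. Writing $q=3^{\frac{m}{2}}$, I would first pin down the dimension: a short computation gives $3^{\frac{m}{2}}e\equiv e\pmod{3^m-1}$ (since $3^{\frac{m}{2}}\cdot\frac{3^m-1}{2}\equiv\frac{3^m-1}{2}$ because $3^{\frac{m}{2}}$ is odd, and $3^{\frac{m}{2}}(3^{\frac{m}{2}}+1)\equiv 3^{\frac{m}{2}}+1$), so $|C_e|\mid\frac{m}{2}$; examining the base-$3$ expansion of $e$, namely all digits equal to $1$ except a $2$ in positions $0$ and $\frac{m}{2}$, shows that this length-$m$ necklace has minimal period exactly $\frac{m}{2}$, hence $|C_e|=\frac{m}{2}$. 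As $|C_1|=m>\frac{m}{2}$ we also get $e\notin C_1$, so $\dim\mathcal{C}_{(0,1,e)}=3^m-\frac{3m}{2}-2$. The exclusion of weight-$1$ and weight-$2$ codewords is identical to Theorem \ref{p3-thm}, and the weight-$3$ case again reduces, after taking $(c_1,c_2,c_3)=(1,1,1)$ by symmetry and using that $e$ is even, to proving that $(x+1)^e+x^e+1=0$ has no root in $\mathbb{F}_{3^m}\setminus\F_3$; optimality of the resulting $[3^m-1,3^m-\frac{3m}{2}-2,4]$ parameters then follows from Lemma \ref{spherepackingtight}.

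The new ingredient is that $e=\frac{3^m-1}{2}+q+1$ carries the exponent $s=\frac{3^m-1}{2}$, so for $x\neq 0$ one has $x^e=\chi(x)\,x^{q+1}$ with $\chi(x)=x^{s}\in\{\pm1\}$ the quadratic character. After checking that $x=-1$ is not a root, I would fix $x\in\mathbb{F}_{3^m}\setminus\F_3$, set $a=\chi(x)$ and $b=\chi(x+1)$, and use $(x+1)^{q+1}=(x^q+1)(x+1)$ (Frobenius) to rewrite the equation as
\[(a+b)\,x^{q+1}+b\,x^{q}+b\,x+(b+1)=0,\]
and then analyze the four sign patterns $(a,b)\in\{\pm1\}^2$. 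When $(a,b)=(1,1)$ this collapses to $(x^q-1)(x-1)=(x-1)^{q+1}=0$ (using $x^q-1=(x-1)^q$ in characteristic $3$), forcing $x=1$, which is excluded.

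The hypothesis $m\equiv0\pmod4$ is consumed in the remaining three cases, and case $(-1,-1)$ is the main obstacle. For $(1,-1)$ the equation gives $x^q=-x$, so $x^{q-1}=-1$ and, writing $s=(q-1)\frac{q+1}{2}$, $\chi(x)=(x^{q-1})^{(q+1)/2}=(-1)^{(q+1)/2}$; since $m\equiv0\pmod4$ makes $\frac{m}{2}$ even and hence $q\equiv1\pmod4$ with $\frac{q+1}{2}$ odd, this forces $\chi(x)=-1$, contradicting $a=1$. The case $(-1,1)$ is symmetric, giving $(x+1)^q=-(x+1)$ and $\chi(x+1)=-1\neq b$. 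In the obstacle case $(-1,-1)$ the equation becomes $x^{q+1}=x^q+x$, i.e.\ $x^q=\frac{x}{x-1}$; from this I would deduce $x^{q-1}=(x-1)^{-1}$ and $(x+1)^{q-1}=-(x-1)^{-1}$, and raising both to the (odd) power $\frac{q+1}{2}$ yields $\chi(x+1)=-\chi(x)$. Thus $x$ and $x+1$ cannot both be nonsquares, contradicting $a=b=-1$. Hence the equation has no root in $\mathbb{F}_{3^m}\setminus\F_3$, so $\mathcal{C}_{(0,1,e)}$ has minimum distance $4$ and is optimal. The one point that needs care throughout is the parity of $\frac{q+1}{2}$: it is exactly this that forces $m\equiv0\pmod4$ rather than merely $m$ even.
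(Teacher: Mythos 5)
Your proposal is correct and follows essentially the same route as the paper: reduce the weight-3 condition to $(x+1)^e+x^e+1=0$, split into the four cases according to the signs of $x^s$ and $(x+1)^s$, and in each case derive a contradiction from the fact that $\frac{3^{m/2}+1}{2}$ is odd when $m\equiv 0\pmod 4$ (your $(-1,-1)$ case, giving $\chi(x+1)=-\chi(x)$, is the paper's Case 3 after the substitution $x\mapsto 1/x$). The only genuine difference is your digit-necklace argument for $|C_e|=\frac{m}{2}$, where the paper instead reduces to $|C_{3^{m/2}+1}|$ via Lemma \ref{wutingting}; both are fine.
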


{\em Proof:}
By Lemma \ref{wutingting},  we know that  $|{C}_{3^{\frac{m}{2}}+1}|=\frac{m}{2}$. Since  $(3^m-1)|e(3^l-1)$ if and only if  $(3^m-1)|(e+\frac{3^m-1}{2})(3^l-1)$,  
we have   $|{C}_{e}|=|{C}_{3^{\frac{m}{2}}+1}|=\frac{m}{2}$.
It is clearly that $\mathcal{C}_{(0,1,e)}$ does not have a  codeword of Hamming weight 1.
In the following we show that
$\mathcal{C}_{(0,1,e)}$ does not have a  codeword of Hamming weight 2 or 3.

Suppose that
$\mathcal{C}_{(0,1,e)}$ has a codeword of Hamming weight 2. Then there exist two elements ${{c}_{1}}$,
${{c}_{2}}\in {{\mathbb{F}}_{3}^*}$ and two distinct elements ${{x}_{1}} $,
${{x}_{2}}\in {{\mathbb{F}}_{p^m}^*}$ such that (\ref{weight2equ}) is satisfied.
By   (\ref{weight2equ}),  ${c}_{1}=-{c}_{2}$ and
${x}_{1}={x}_{2}$, which is contrary to ${x}_{1}\ne {x} _{2}$.
Thus $\mathcal{C}_{(0,1,e)}$ does not have a  codeword of Hamming weight 2.

$\mathcal{C}_{(0,1,e)}$ has  a codeword of  Hamming weight 3 if and only if   (\ref{pm-21}) has no pairwise different nonzero  solutions ${x}_{1},{x}_{2},{x}_{3}\in \F_{3^m}^*$.
Similar as  the proof of Theorem \ref{p3-thm}, we only need to show that 
$(x+1)^e+x^e+1=0$ has no solution in $\F_{3^m}\setminus \F_3$. 
Let $s=\frac{3^m-1}{2}$ and $k=\frac{m}{2}$.  Then $x^s=\pm 1$ and $(x+1)^s=\pm 1$. We consider the following four cases according to the values of $x^s$ and $(x+1)^s$.

Case 1), 	  	$({x}^{s},{(x+1)}^{s})=(1,1)$.
In this case,
$(x+1)^e+x^e+1=(x+1)^{3^k+1}+x^{3^k+1}+1=0$, i.e.,   $x^{3^k+1}-x^{3^k}-x+1=(x-1)^{3^k+1}=0$. Thus, $x=1$.

Case 2), 	  	$({x}^{s},{(x+1)}^{s})=(1,-1)$.
In this case,
$(x+1)^e+x^e+1=-(x+1)^{3^k+1}+x^{3^k+1}+1=0$, i.e.,   $x^{3^k}+x=x(x^{3^k-1}+1)=0$. Since 
$x\ne 0$, then $x^{3^k-1}=-1$. Therefore,  $$x^s=(x^{3^k-1})^{\frac{3^k+1}{2}}=-1$$ since $ \frac{3^k+1}{2}$ is odd due to 
$2|k$.  This  is contrary to $x^s=1$.

Case 3), 	  	$({x}^{s},{(x+1)}^{s})=(-1,-1)$.
In this case,
$(x+1)^e+x^e+1=-(x+1)^{3^k+1}-x^{3^k+1}+1=0$, i.e.,   $-x^{3^k}+x^{3^k-1}+1=0$. 
Thus, $(\frac{1}{x})^{3^k}+\frac{1}{x}-1=0$,   i.e.,  $$(\frac{1}{x}+1)^{3^k}+ \frac{1}{x}+1=0.$$
Since $x\ne -1$  due to ${(x+1)}^{s}=-1$, we have $(\frac{1}{x}+1)^{3^k-1}=-1$. Therefore, 
$$(\frac{x+1}{x})^{s}=(\frac{1}{x}+1)^{s}=((\frac{1}{x}+1)^{3^k-1})^{\frac{3^k+1}{2}}=-1$$ due to $ \frac{3^k+1}{2}$ is odd. 
This  is contrary to $(\frac{x+1}{x})^{s}=\frac{(x+1)^s}{x^s}=\frac{-1}{-1}=1$.

Case 4), 	  	$({x}^{s},{(x+1)}^{s})=(-1,1)$.
In this case,
$(x+1)^e+x^e+1=(x+1)^{3^k+1}-x^{3^k+1}+1=0$, i.e.,   ${x}^{3^k}+x-1=0$.
Thus,  $(x+1)^{3^k}+x+1=0 $. 
Since $x\ne -1$  due to ${(x+1)}^{s}=1$, we have $(x+1)^{3^k-1}=-1$. Therefore, 
$$(x+1)^{s}=((x+1)^{3^k-1})^{\frac{3^k+1}{2}}=-1$$ due to $ \frac{3^k+1}{2}$ is odd. 
This  is contrary to $(x+1)^{s}=1$.

As a consequence,
$\mathcal{C}_{(0,1,e)}$ does not have  a codeword of Hamming weight 3. This completes the proof. 
\done

\begin{example}
	Let $p=3$ and  $m=8$.  Then  $e= \frac{3^{m}-1}{2}+3^{\frac{m}{2}}+1=3362$.  Let $\alpha $ be the generator  of   $\mathbb{F}_{{3}^{8}}^*$
	with 
	$\alpha^8-\alpha^5+\alpha^4-\alpha^2-\alpha-1=0$. Then the  code $\mathcal{C}_{(0,1,e)}$
	has  parameters $[6560,  6547,   4]$ and
	generator polynomial $x^{13} + 2x^{11} + 2x^{10} + 2x^8 + x^7 + x^5 + 2x^4 + 2x^3 + 2$.
\end{example}

Similar as the proof of Theorem \ref{p3-thm2}, we can obtain another class of  optimal ternary  cyclic codes  with parameters $[3^m-1, 3^m-\frac{3m}{2}-2,4]$.

\begin{corollary}\label{coro-2}
	Let  $m\equiv 0\pmod 4$ and  $e= 3^{\frac{m}{2}}+1$. Let $s=\frac{3^{m}-1}{2}$.
	Then $\mathcal{C}_{(1,e,s)}$ is an optimal ternary  cyclic code with parameters $[3^m-1, 3^m-\frac{3m}{2}-2,4]$.
\end{corollary}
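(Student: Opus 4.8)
The plan is to follow the blueprint of Theorem \ref{p3-thm2}, adapting the three parity-check relations to the zeros $\alpha,\alpha^e,\alpha^s$ of $\mathcal{C}_{(1,e,s)}$. First I would record the dimension. Since $m$ is even and $e=3^{\frac{m}{2}}+1$, Lemma \ref{wutingting} gives $|C_e|=\frac{m}{2}$; moreover $|C_1|=m$ and, because $3s\equiv s\pmod{3^m-1}$, we have $|C_s|=1$. As $1,e,s$ lie in three distinct cyclotomic cosets, the generator polynomial has degree $m+\frac{m}{2}+1=\frac{3m}{2}+1$, so $\dim\mathcal{C}_{(1,e,s)}=3^m-\frac{3m}{2}-2$. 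By Lemma \ref{spherepackingtight} it then suffices to prove that the minimum distance is at least $4$, since the bound forces it to equal $4$ and the code to be optimal.

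For the distance I would rule out codewords of weight $1,2,3$. Weight $1$ is immediate. For weight $2$, such a codeword yields $c_1,c_2\in\mathbb{F}_3^*$ and distinct $x_1,x_2\in\mathbb{F}_{3^m}^*$ with $c_1x_1+c_2x_2=0$ and $c_1x_1^e+c_2x_2^e=0$ (together with a third relation coming from $\alpha^s$). The first relation forces $c_1=c_2$ and $x_2=-x_1$, so the second becomes $c_1x_1^e\bigl(1+(-1)^e\bigr)=2c_1x_1^e\neq 0$ because $e$ is even, a contradiction.

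The heart of the argument is weight $3$. A weight-$3$ codeword gives $c_1,c_2,c_3\in\mathbb{F}_3^*$ and pairwise distinct $x_1,x_2,x_3\in\mathbb{F}_{3^m}^*$ with $c_1x_1+c_2x_2+c_3x_3=0$, $c_1x_1^e+c_2x_2^e+c_3x_3^e=0$, and the extra relation $c_1x_1^s+c_2x_2^s+c_3x_3^s=0$. Since $x_i^s\in\{1,-1\}$ is the quadratic character of $x_i$, and a sum of three elements of $\{1,-1\}$ vanishes in $\mathbb{F}_3$ only when all three coincide, the last relation forces $c_1x_1^s=c_2x_2^s=c_3x_3^s$. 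Setting $x=x_1/x_3$, $y=x_2/x_3$ and normalizing (scaling the codeword so that $c_3=1$, and using the symmetry $x_1\leftrightarrow x_2$) reduces the situation to $(c_1,c_2)\in\{(1,1),(1,-1),(-1,-1)\}$. In each case the first two relations collapse to an equation of the shape $(x+1)^e+x^e\pm 1=0$, while the character relation pins down the pair $(x^s,(x+1)^s)$. After the substitution $X=-x$ where needed, these become precisely the four cases $(x^s,(x+1)^s)\in\{(1,1),(1,-1),(-1,-1),(-1,1)\}$ treated in Theorem \ref{p3-thm2}, and with $e=3^{\frac{m}{2}}+1$ the reduced equations coincide with the ones obtained there.

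I would then resolve each case exactly as in Theorem \ref{p3-thm2}: the equation is lowered either to $(x-1)^{3^{\frac{m}{2}}+1}=0$, or to $x^{3^{\frac{m}{2}}-1}=-1$ (respectively $(x+1)^{3^{\frac{m}{2}}-1}=-1$), and in the latter situations one raises to the power $\frac{3^{\frac{m}{2}}+1}{2}$ to compute $x^s$ (respectively $(x+1)^s$) and contradict the character value forced by that case. The main obstacle, and the only place where the hypothesis $m\equiv 0\pmod 4$ is used, is this final step: the computation requires $\frac{3^{\frac{m}{2}}+1}{2}$ to be odd, which holds exactly because $m/2$ is even. Every case then yields $x\in\mathbb{F}_3$, contradicting the distinctness of $x_1,x_2,x_3$, so no weight-$3$ codeword exists and the proof is complete.
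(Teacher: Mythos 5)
Your proof is correct and takes essentially the route the paper intends: the paper justifies Corollary \ref{coro-2} only by the remark that it is ``similar'' to Theorem \ref{p3-thm2}, and your reduction of the three parity-check relations --- in particular using $c_1x_1^s+c_2x_2^s+c_3x_3^s=0$ to force $c_1x_1^s=c_2x_2^s=c_3x_3^s$ and then matching each sign pattern to one of the four cases $(x^s,(x+1)^s)$ treated there --- supplies exactly the omitted details. You also correctly locate the only use of $m\equiv 0\pmod 4$, namely that it makes $\frac{3^{m/2}+1}{2}$ odd in the final character computation.
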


\begin{example}
	Let $p=3$ and  $m=8$.  Then  $e= 3^{\frac{m}{2}}+1=82$.  Let $\alpha $ be the generator  of   $\mathbb{F}_{{3}^{8}}^*$
	with 
	$\alpha^8-\alpha^5+\alpha^4-\alpha^2-\alpha-1=0$. Then the  code $\mathcal{C}_{(1,e,s)}$
	has  parameters $[6560,  6547,   4]$ and
	generator polynomial $x^{13} + 2x^{11} + 2x^{10} +  x^7 + 2x^3 + 2x^2 + 2x + 1$.

\end{example}

\begin{remark} It can be shown that  $e=\frac{3^m-1}{2}+3^{\frac{m}{2}}+1$  and $3^{\frac{m}{2}}+1$
are not in the same cyclotomic coset.
Actually, if  there exists  an integer $1\le i\le m-1 $ such that 
$3^i(3^{\frac{m}{2}}+1)\equiv \frac{3^m-1}{2}+3^{\frac{m}{2}}+1 \pmod {3^m-1}$, then   $\frac{3^{m/2}-1}{2}|(3^i-1)$. 
If $i\ne \frac{m}{2}$, then 
 $$\gcd(\frac{3^{\frac{m}{2}}-1}{2},3^i-1)\le \gcd(3^{\frac{m}{2}}-1,3^i-1)=3^{\gcd(m/2,i)}-1\le 3^{\frac{m}{4}}-1<\frac{3^{\frac{m}{2}}-1}{2}$$
due to $m\ge 4$. Therefore,  $\frac{3^{m/2}-1}{2}|(3^i-1)$ implies $i=\frac{m}{2}$. However, 
$3^{\frac{m}{2}}(3^{\frac{m}{2}}+1)\notequiv  \frac{3^m-1}{2}+3^{\frac{m}{2}}+1 \pmod {3^m-1}$.
  As a result, the cyclic code $\mathcal{C}_{(0,1,e)}$   given in Theorem \ref{p3-thm2}
is inequivalent to the known ones.  
 Simiarly, we can show that the optimal code $\mathcal{C}_{(1,e,s)}$ presented in Corollary \ref{coro-2} is  not covered by the known ones in  Table \ref{tab:2222}.

\end{remark}

\section{Optimal ternary cyclic codes with parameters $[3^m-1, 3^m-2m-1, 4]$}\label{Section4}

In this section, we will present a class  of optimal ternary cyclic codes ${\mathcal{C}_{(2,e)}}$ and 
three  classes of  optimal ternary cyclic codes ${\mathcal{C}_{(1,e)}}$. All of them has parameters $[3^m-1, 3^m-2m-1, 4]$.

\subsection{A class of optimal ternary cyclic codes ${\mathcal{C}_{(2,e)}}$ }
In this subsection, we will
consider the exponents $e=3^{\frac{m}{2}}+2$. We will show that 
${\mathcal{C}_{(2,e)}}$ is an optimal ternary cyclic codes with parameters 
$[3^m-1, 3^m-2m-1, 4]$ if $m\equiv 2 \pmod 4$.

\begin{theorem}\label{thm-3}
	Let  $e=3^{h}+2$, where $h=\frac{m}{2}$ and  $m\equiv 2 \pmod 4$. 
	Then $\mathcal{C}_{(2,e)}$ is an optimal cyclic code with parameters $[3^m-1, 3^m-2m-1,4]$.
\end{theorem}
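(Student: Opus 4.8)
The plan is to apply Lemma \ref{lastdance}: since $\mathcal{C}_{(2,e)}$ with $|C_e|=m$ has parameters $[3^m-1,3^m-2m-1,4]$ if and only if $e$ is odd and the two equations in 2) and 3) have no solution in $\F_{3^m}\setminus\F_3$, it suffices to verify these. First the bookkeeping. Writing $h=\frac m2$ and $n=3^m-1=(3^h-1)(3^h+1)$, from $3^h+2=(3^h-1)+3=(3^h+1)+1$ we get $\gcd(3^h+2,3^h-1)=\gcd(3,3^h-1)=1$ and $\gcd(3^h+2,3^h+1)=1$, hence $\gcd(e,n)=1$. By part 1) of Lemma \ref{ce} this gives $|C_e|=m$, so the dimension is $3^m-2m-1$; and $e=3^h+2$ is odd, which is condition 1). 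It remains to handle 2) and 3).

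The common device is this: because $m=2h$, the map $x\mapsto u:=x^{3^h}$ is an involution on $\F_{3^m}$ (since $u^{3^h}=x$), and $T:=x+u$, $N:=xu=x^{3^h+1}$ lie in $\F_{3^h}$. Using $(1+x^2)^e=(1+x^2)^{3^h}(1+x^2)^2=(1+u^2)(1+x^2)^2$ and $x^e=ux^2$, each equation becomes a polynomial identity in $x$ and $u$. A solution with $x\in\F_{3^h}$ has $u=x$ and collapses to $x\in\F_3$, so I may assume $x\neq u$; then adding and subtracting an equation and its $3^h$-th power and cancelling the factor $x-u$ rewrites everything through $T$ and $N$.

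For condition 2), expanding $(1+x^2)^e-(1+x^e)^2$ and symmetrizing yields the two relations $T^3+TN+T+N=0$ and $T(T^3-TN+N)=0$. Here $T=0$ forces $N=0$, i.e.\ $x=0$, so $T\neq0$ and $T^3=N(T-1)$; substituting into the first relation gives $T^3-T+1=0$ and then $N=1$. Now $f(X)=X^3-X+1$ is irreducible over $\F_3$, so $T\in\F_{27}\setminus\F_3$. Since $x\neq u$, the minimal polynomial $z^2-Tz+1$ of $x$ over $\F_{3^h}$ must be irreducible, so its discriminant $T^2-1$ must be a non-square in $\F_{3^h}$. But a direct computation $N_{\F_{27}/\F_3}(T^2-1)=f(1)f(-1)=1$ shows $T^2-1$ is a square in $\F_{27}\subseteq\F_{3^h}$, a contradiction. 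Thus 2) holds for every even $m$.

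For condition 3), the same reduction applies; alternatively the substitution $w=1+x^2$ factors the equation as $(1-u)\big[(1+u)w^2-uw+(1-u)\big]=0$, and after discarding $x=1$ the bracket is a quadratic in $w$ whose discriminant is $-(u^2+1)$, so solvability forces $-(u^2+1)$ to be a square in $\F_{3^m}$. In symmetric form one again gets a system in $(T,N)$ whose solutions must have $z^2-Tz+N$ irreducible over $\F_{3^h}$, i.e.\ $T^2-N$ a non-square there. The hard part is closing this case: I must show these square/non-square requirements are incompatible, and this is precisely where the hypothesis $m\equiv 2\pmod 4$ (so that $h$ is odd and $\frac{3^h+1}{2}$ is even) enters, through a quadratic-character argument analogous to the four-case split on $x^s$ and $(x+1)^s$ used in the proof of Theorem \ref{p3-thm2}. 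I expect this square-counting step for 3) to be the main obstacle; conditions 1), 2) and the dimension count are routine.
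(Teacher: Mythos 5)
Your setup is correct: Lemma \ref{lastdance} is the right tool, $\gcd(e,3^m-1)=1$ does give $|C_e|=m$ via Lemma \ref{ce}, and $e$ is odd. Your treatment of condition 2) is also correct and is genuinely different from the paper's: writing $u=x^{3^h}$, $T=x+u$, $N=xu$ and symmetrizing does yield $T(T^3-TN+N)=0$ and $T^3+TN+T+N=0$ (I checked the expansion), whence $N=1$, $T^3-T+1=0$, and the norm/discriminant argument rules out $x\neq u$; this is arguably cleaner than the paper's route, which instead factors $(x^2+1)^e-(x^e+1)^2$ as $x^2\bigl(x^{3^h}-1-(x^{3^h-1}+x)\bigr)\bigl(x^{3^h}-1+x^{3^h-1}+x\bigr)$, solves each factor for $x^{3^h}$ as a rational function of $x$, and applies Frobenius once more using $x^{3^{2h}}=x$ to force $x\in\F_3$. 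Both arguments show condition 2) holds for every even $m$.

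The genuine gap is condition 3), which you explicitly leave open ("I expect this square-counting step for 3) to be the main obstacle"). This is not a routine loose end: it is the only place the hypothesis $m\equiv 2\pmod 4$ is used, so without it the theorem is not proved. Moreover, your proposed closing move --- a quadratic-character incompatibility between $-(u^2+1)$ being a square in $\F_{3^m}$ and $z^2-Tz+N$ being irreducible over $\F_{3^h}$ --- is unlikely to suffice, because the true obstruction is not quadratic. The paper factors out $x^{3^h}-1$ to reduce condition 3) to $(x^{3^h}+1)(x^4+x^2-1)+x^2=0$, solves for $x^{3^h}$ as a rational function of $x$, applies $(\cdot)^{3^h}$ again and uses $x^{3^{2h}}=x$ to obtain an explicit polynomial $f(x)$ of degree $17$, whose factorization over $\F_3$ is $(x-1)^5(x^4+x-1)(x^4-x^3-1)(x^4-x^3+x^2-x+1)$. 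The nontrivial candidate solutions are therefore roots of irreducible quartics, i.e.\ elements of $\F_{81}\setminus\F_3$, and these lie in $\F_{3^m}$ exactly when $4\mid m$; this is a degree-$4$ field-membership condition, not a square/non-square condition, and it is what forces $m\equiv 2\pmod 4$. You would need to either carry out this explicit elimination (or an equivalent resultant computation in your $(T,N)$ coordinates) and analyze the degrees of the resulting irreducible factors, rather than rely on a parity-of-$\frac{3^h+1}{2}$ character argument.
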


{\em Proof:} Since $e$ is odd, we have  $e\notin {C}_{2}$. 
Note that $\gcd(e(3^{\frac{m}{2}}-2),3^m-1)=\gcd(3^m-4,3^m-1)=1$, which implies
$\gcd(e,3^m-1)=1$. 
According to Lemma \ref{ce}, we have  $|C_{e}|=m$.  
By Lemma \ref{lastdance}, we need to show that conditions 2) and 3) are met. 

We first consider the condition 2) in Lemma \ref{lastdance}, i.e.,  we show that  the equation 
$(x^2+1)^e-(x^e+1)^2=0$ has no solutions in $\F_{3^m}\setminus\F_3$.  Note that  $$(x^2+1)^e-(x^e+1)^2=x^{2(3^h+1)}-x^{2\cdot 3^h}-x^{3^h+2}-x^4+x^2=0.$$
The above equation can be factorized as  $$ x^2(x^{3^h}-1-(x^{3^h-1}+x))(x^{3^h}-1+x^{3^h-1}+x)=0.$$
Thus,  $x^{3^h}-1-(x^{3^h-1}+x)=0$ or $x^{3^h}-1+x^{3^h-1}+x=0$. 
We will show that both $x^{3^h}-1-(x^{3^h-1}+x)=0$ and  $x^{3^h}-1+x^{3^h-1}+x=0$ has no solution in $\F_{3^m}\setminus\F_3$.

Case 1), $x^{3^h}-1-(x^{3^h-1}+x)=0$. Note that $x=0$ and $x=1$ are  not  solutions of $x^{3^h}-1-(x^{3^h-1}+x)=0$. Thus, by $x^{3^h}-1-(x^{3^h-1}+x)=0$,  we have $x^{3^h}(1-\frac{1}{x})=x+1$, i.e.,  
\begin{equation}\label{neweq1}
	x^{3^h}=\frac{x^2+x}{x-1}.
\end{equation} 
Raising both sides of (\ref{neweq1}) to the power of $3^h$   gives  
\begin{equation}\label{neweq111}
 x^{3^{2h}}=\frac{{(x^{2\cdot 3^h})}+x^{3^h}}{x^{3^h}-1}.
\end{equation}
Note that  $ x^{3^{2h}}=x^{3^m}=x$. By equation (\ref{neweq1}), we have 
\begin{equation}\label{neweq11111}
x=\frac{x^4 + x^2 + 2x}{x^3 + 2x^2 + x + 2}. 
\end{equation}

If  $x^3 + 2x^2 + x + 2=(x+2)(x^2+1)=0 $, then $x=1$ or $x^2=-1$. Since $x\ne 1$, we have 
 $x^2=-1$.   By (\ref{neweq1}), $ x^{3^h}=1$, which implies $x=1$. This is contrary to 
$x^2=-1$. As a result, $x^3 + 2x^2 + x + 2\ne  0 $. Then by (\ref{neweq11111}), one obtains 
 $2x^3=0$, which implies $x=0$.

Case 2), $x^{3^h}-1+x^{3^h-1}+x=0$. Note that $x=0$ and $x=-1$ are  not  solutions  of $x^{3^h}-1+x^{3^h-1}+x=0$. Thus, by $x^{3^h}-1+x^{3^h-1}+x=0$,  we have $x^{3^h}(1+\frac{1}{x})=1-x$, i.e.,  
\begin{equation}\label{neweq2}
	x^{3^h}=\frac{-x^2+x}{x+1}.
\end{equation} 
Raising both sides of (\ref{neweq2}) to the power of $3^h$   will lead to $x^{3^{2h}}= x=\frac{-{(x^{2\cdot 3^h})}+x^{3^h}}{x^{3^h}+1}$.   By   (\ref{neweq2}), we have
\begin{equation}\label{neweq222}
x=\frac{-x^4 + x^3 -x^2 + x}{-x^3 + x^2 + 1}. 
\end{equation}
Similar as in Case 1), one can show that $-x^3 + x^2 + 1\ne 0$. Thus, according to (\ref{neweq222}), we have 
  $x^2=0$, which implies $x=0.$

Now  we consider the condition 3) in Lemma \ref{lastdance}, i.e.,  the  equation 
$(x^2+1)^e+(x^e+1)^2=0$  has no solution in $\F_{3^m}\setminus\F_3$. Note that 
 $$(x^2+1)^e+(x^e+1)^2=-(x^{2\cdot (3^h+2)}+x^{2\cdot (3^h+1)}-x^{2\cdot 3^h}+x^{3^h+2}-x^4+x^2+1), $$ and 
\begin{eqnarray}\label{neweq3}
	&& x^{2\cdot (3^h+2)}+x^{2\cdot (3^h+1)}-x^{2\cdot 3^h}+x^{3^h+2}-x^4+x^2+1\nonumber\\
	&=&x^{2\cdot 3^h}(x^4+x^2-1)+x^{3^h+2}-(x^4+x^2-1)-x^2\nonumber\\
	&=&(x^2-1)^{3^h}(x^4+x^2-1)+x^2(x^{3^h}-1)\nonumber\\
	&=&(x^{3^h}-1)((x^{3^h}+1)(x^4+x^2-1)+x^2)=0.
\end{eqnarray}

Therefore, we only need to show that 
\begin{equation}\label{neweq4}
	(x^{3^h}+1)(x^4+x^2-1)+x^2=0
\end{equation}
has no solution in $\F_{3^m}\setminus\F_3$.

If $x^4+x^2-1=0$, then from (\ref{neweq4}), we have $x^2=0$, i.e.,  $x=0$. Thus, $x^4+x^2-1\ne 0$. Therefore, 
from (\ref{neweq4}), we have 
\begin{equation}\label{neweq5}
	x^{3^h}=\frac{-x^4+x^2+1}{x^4+x^2-1}
\end{equation}
Thus, 
\begin{equation}\label{neweq52}
x^{2\cdot 3^h}=\frac{x^8+x^6-x^4-x^2+1}{x^8-x^6-x^4+x^2+1}=\frac{y_1}{y_2},
\end{equation}
where $y_1=x^8+x^6-x^4-x^2+1$   and  $y_2=x^8-x^6-x^4+x^2+1$. 
Similarly, one can obtain 
\begin{equation}\label{neweq53}
 x^{4\cdot 3^h}=\frac{y_1^2}{y_2^2}.
\end{equation}
Raising both sides of (\ref{neweq5}) to the power of $3^h$ gives 
$x^{3^{2h}}=\frac{-x^{4\cdot 3^h}+x^{2\cdot 3^h}+1}{x^{4\cdot 3^h}+x^{2\cdot 3^h}-1}$. Together with 
$ x^{3^{2h}}=x$, we have $$x=\frac{-x^{4\cdot 3^h}+x^{2\cdot 3^h}+1}{x^{4\cdot 3^h}+x^{2\cdot 3^h}-1}.$$
By (\ref{neweq52}) and (\ref{neweq53}), we have 
\begin{equation}\label{neweq533}
x((\frac{y_1}{y_2})^2+\frac{y_1}{y_2}-1)=-(\frac{y_1}{y_2})^2+\frac{y_1}{y_2}+1,
\end{equation}
 i.e., ,
\begin{equation}\label{neweq5333}
  x(y_1^2+y_1y_2-y_2^2)+y_1^2-y_1y_2-y_2^2=0.   
\end{equation}
By $y_1=x^8+x^6-x^4-x^2+1$   and  $y_2=x^8-x^6-x^4+x^2+1$, (\ref{neweq5333}) becomes  
$$f(x)\triangleq x^{17}-x^{16}+x^{15}+x^{14}+x^{11}+x^{10}-x^9+x^8-x^7-x^6-x^3-x^2+x-1=0.$$ 

%Note that $\gcd(f(x),x^3-x)=x-1$,  $\gcd(f(x),x^{3^2}-x)=x-1$,   $\gcd(f(x),x^{3^3}-x)=x-1$,   and $\gcd(f(x),x^{3^4}-x)=x^{13} + x^{11} -x^9 + x^4 -x^2 + 2$. It then follows from  Lemma \ref{factorlem} that $f(x)$  has no irreducible factors of degrees 2 and 3, and  $f(x)$  has  irreducible factors of degrees 1 and 4.  Together with $deg(f(x))=17$, we have that  $f(x)$ only  has  irreducible factors of degrees 1 and 4. 
Thanks to the Magma computation,  the  canonical factorization of $f(x)$ over $\F_3$  is given by 
$$f(x)=(x-1)^5(x^4+x-1)(x^4-x^3-1)(x^4-x^3+x^2-x+1).$$
Then by Lemma \ref{rootlem}, $f(x)=0$ has no solutions in $\F_{3^m}\setminus \F_3$ if and only if 
$m \notequiv 0 \pmod 4$.

Therefore,  $\mathcal{C}_{(2,e)}$ does not have a codeword of Hamming weight 3.
This completes the proof. \done

\begin{example}
	Let $p=3$ and  $m=6$.  Then  $e= 3^{\frac{m}{2}}+2=29$.  Let $\alpha $ be the generator  of   $\mathbb{F}_{{3}^{6}}^*$
	with 
	$\alpha^6+2\alpha^4+\alpha^2+2\alpha+2=0$. Then the  code $\mathcal{C}_{(2,e)}$
	has  parameters $[728,716,4]$ and
	generator polynomial $x^{12} -x^{11} + x^{10} -x^6 -x^3 -1$.
\end{example}

\begin{remark}
Note that $\gcd(3^{\frac{m}{2}}+2, 3^m-1)=1$. Thus,  the  code $\mathcal{C}_{(2,e)}$ given in Theorem \ref{thm-3} is equivalent to 
$\mathcal{C}_{(1,2e^{-1})}=\mathcal{C}_{(1,4\cdot 3^{\frac{m}{2}}-2)}$. Let $m=6$, we have  $4\cdot 3^{\frac{m}{2}}-2=106$. 
Magma
experiments confirm  that $106$ and all the exponents given in Table \ref{tab:1111} are not in the same coset. 
Thus, the optimal ternary cyclic codes $\mathcal{C}_{(1,2e^{-1})}$ given in Theorem \ref{thm-3}
are not covered by the known ones in Table \ref{tab:1111}. Take $m=6$, then $e=3^3+2=29$. 
Magma
experiments confirm  that $29$ and all the exponents given in  type 6 in  Table \ref{tab:man} are not in the same coset. 
Thus, the optimal ternary cyclic codes $\mathcal{C}_{(2,e)}$ given in Theorem \ref{thm-3}
are not covered by the known ones in Table \ref{tab:man}.
As a result, the optimal ternary cyclic code $\mathcal{C}_{(2,e)}$ given in Theorem \ref{thm-3} is inequivalent to the known ones in Tables \ref{tab:1111} and  \ref{tab:man}. 

\end{remark}

\subsection{Three  classes  of optimal ternary cyclic codes ${\mathcal{C}_{(1,e)}}$ }

%  In this subsection,   we will consider the exponents $e$  satisfying  $e(3^h-1)\equiv \frac{3^m+1}{2} \pmod{3^m-1}$. 

\begin{theorem}\label{thm3h-1}
	Let $m$ be odd and $e$  be an even integer satisfying $e(3^h-1)\equiv \frac{3^m+1}{2} \pmod{3^m-1}$, where $ 1\le  h\le m-1$.    Then $\mathcal{C}_{(1,e)}$ has parameters $[3^m-1, 3^m-2m-1, 4]$ and
 is optimal if $\gcd(3^m-1, 3^h-2)=1$.
\end{theorem}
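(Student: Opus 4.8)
The plan is to verify the three conditions of the Ding--Helleseth characterization in Lemma \ref{thm-DH}. First I would pin down the dimension by computing $\gcd(e,n)$ with $n=3^m-1$. Since $m$ is odd, $3^m\equiv 3\pmod 4$, so $n\equiv 2\pmod 4$ and $2$ divides $n$ exactly once. If an odd prime $p$ divided both $e$ and $n$, then reducing $e(3^h-1)\equiv\frac{3^m+1}{2}\pmod n$ modulo $p$ would force $p\mid 3^m+1$; together with $p\mid 3^m-1$ this gives $p\mid 2$, a contradiction. Hence $\gcd(e,n)=2$ because $e$ is even. By part 1) of Lemma \ref{ce} this gives $|C_e|=m$, and since every element of $C_1$ is coprime to $n$ we also get $e\notin C_1$; this establishes the dimension $3^m-1-2m$ and condition 1) of Lemma \ref{thm-DH}. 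Two identities will drive everything that follows: for $y\in\F_{3^m}^*$, $y^{e(3^h-1)}=y^{(3^m+1)/2}=\eta(y)\,y$, where $\eta(y)=y^{(3^m-1)/2}$ is the quadratic character, and $\eta(y^e)=1$ since $e$ is even. I would also record that $\gcd(e,n)=2$ forces every $y$ with $y^e=1$ to lie in $\F_3$.

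For conditions 2) and 3) I would assume a solution $x\in\F_{3^m}\setminus\F_3$, so $x,x+1\notin\{0,1,-1\}$, and set $u=x^e$, $v=(x+1)^e$, $a=\eta(x)$, $b=\eta(x+1)$. The relations $u^{3^h-1}=ax$, $v^{3^h-1}=b(x+1)$, the Frobenius identity $(u+1)^{3^h}=u^{3^h}+1$, and $v=\varepsilon(u+1)$ (with $\varepsilon=-1$ for condition 2), $\varepsilon=+1$ for condition 3)) together collapse \emph{both} equations to the single relation $(b-a)ux+bu+bx+(b-1)=0$. Solving this in the four sign cases $(a,b)$ yields $u$ explicitly: $u=-x$, $u=1$, $u=x/(x-1)$, or $u=1-x$, which is the skeleton of the whole argument.

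The easy cases fall to the quadratic character and parity. Since $\eta(u)=1$ must hold, the case $u=-x$ is impossible (as $m$ odd gives $\eta(-1)=-1$), and $u=1$ forces $x^e=1$, hence $x\in\F_3$. For condition 2) the remaining two cases $u=x/(x-1)$ and $u=1-x$ also die purely on character grounds: the square condition $\eta(v)=1$ with $v=-(u+1)$ becomes incompatible with $\eta(u)=1$. Thus condition 2) needs only $m$ odd. The genuine obstacle is condition 3) in the two cases $a=-1$, where the character test is consistent and the coprimality hypothesis $\gcd(3^m-1,3^h-2)=1$ must enter.

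For those cases I would use the following device, which is the crux of the proof. In each surviving case one extracts a clean relation $R^{e-1}=-1$ with $R=x$, $R=(x+1)/x$, or $R=x+1$, and correspondingly $\eta(R)=1,-1,-1$. Raising to the power $3^h-1$ and using $(e-1)(3^h-1)\equiv \frac{3^m-1}{2}+2-3^h\pmod n$ gives $\eta(R)\,R^{\,2-3^h}=1$, i.e. $R^{\,3^h-2}=\eta(R)$. Since $3^h-2$ is odd we have $\eta(R)^{3^h-2}=\eta(R)$, and since $\gcd(3^h-2,n)=1$ the map $t\mapsto t^{3^h-2}$ is a bijection of $\F_{3^m}^*$; injectivity then forces $R=\eta(R)\in\{1,-1\}$, whence $x=1\in\F_3$ in every case, a contradiction. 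I expect the main difficulty to be exactly this step: the naive route of raising $x^{3^h}=\tfrac{x^2}{x^2+x-1}$ repeatedly to the $3^h$ power reintroduces $x^{3^h}$ and never terminates, whereas rewriting the constraint as the single monomial equation $R^{3^h-2}=\eta(R)$ lets the coprimality hypothesis act cleanly as injectivity. With all three conditions of Lemma \ref{thm-DH} verified, $\mathcal{C}_{(1,e)}$ has parameters $[3^m-1,3^m-2m-1,4]$ and is optimal.
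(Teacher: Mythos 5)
Your proposal is correct and follows essentially the same route as the paper: reduce to Lemma \ref{thm-DH}, raise $(1+x)^e=\pm(x^e+1)$ to the $(3^h-1)$-th power so that $e(3^h-1)\equiv\frac{3^m+1}{2}\pmod{3^m-1}$ collapses both conditions to a single relation, split into the four sign cases of $\bigl(x^{s},(x+1)^{s}\bigr)$, and invoke $\gcd(3^h-2,3^m-1)=1$ in the hard cases, arriving at the same key identity $(x+1)^{3^h-2}=-1$ (equivalently the paper's $(x+1)^{2(3^h-2)}=1$). The only deviations are cosmetic and harmless: you kill the case $x^e=-x$ with $\eta(-1)=-1$ where the paper reuses the gcd hypothesis, you note that condition 2) of Lemma \ref{thm-DH} needs no gcd hypothesis at all, and you handle the case $(\eta(x),\eta(x+1))=(-1,1)$ directly via $R=(x+1)/x$ where the paper substitutes $y=1/x$ to reduce it to the preceding case.
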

\begin{proof}
	Since $m$ is odd, we have $\gcd(\frac{3^m+1}{2}, 3^m-1)=2$ and consequently $\gcd(e(3^h-1), 3^m-1)=2$, which implies $\gcd(e, 3^m-1)=2$ and $\gcd(3^h-1,3^m-1)=2$.
 Hence $\gcd(h, m)=1$,  $e\not\in C_{1}$,   and $|C_{e}|=m$.
  According to Lemma \ref{thm-DH},   $\mathcal{C}_{(1, e)}$ has parameters $[3^m-1, 3^m-2m-1, 4]$ if the equation $(1+x)^{e}=\pm (x^e+1)$ has no solution in $\F_{3^m}\setminus \F_{3}$. Raising both sides of the above
 equation  to the  $(3^h-1)$-th power will lead to $(1+x)^{e(3^h-1)}=(x^e+1)^{3^h-1}$, i.e.,  
\begin{equation}\label{man!}
	(1+x)^s(1+x)(x^e+1)=x^{e\cdot 3^h}+1=x^{e(3^h-1)}\cdot x^e+1=x^{s}\cdot x^{1+e}+1
\end{equation}	
	 due to the fact that $e(3^h-1)\equiv 1+s\pmod{3^m-1}$, where $s=\frac{3^m-1}{2}$. 
We will discuss the solutions of equation (\ref{man!}) in the following four cases.

	Case 1), $(1+x)^s=x^s=1$. In this case, (\ref{man!}) becomes $(x^e+1)(x+1)=x^{e+1}+1$, i.e.,   $x^e+x=x(x^{e-1}+1)=0$ and consequently $x^{e-1}=-1$
due to $x\ne 0$. Then we have $x^{2(e-1)}=1$.    Note  that
\begin{eqnarray}
	&&\gcd(\frac{3^m-1}{2}, (e-1)(3^h-1))=\gcd(\frac{3^m-1}{2}, e(3^h-1)-(3^h-1))\nonumber\\
	&=&\gcd(\frac{3^m-1}{2}, \frac{3^m+1}{2}-3^h+1)=\gcd(\frac{3^m-1}{2}, 3^h-2)=1.
\end{eqnarray}
 Hence $\gcd(3^m-1, e-1)=1$ and $\gcd(3^m-1, 2(e-1))=2$, which implies $x^2=1$,  i.e.,  $x=\pm 1$.  
	
	Case 2), $(x+1)^s=-1$, $x^s=1$. In this case, $-(x+1)(x^e+1)=x^{e+1}+1$, 
i.e.,  $x^{e+1}-x^e-x+1=(x^e-1)(x-1)=0$, which implies  $ x=1$ or $x^e=1$. Since $\gcd(e, 3^m-1)=2$,  we have $x^2=1$, i.e.,   $x=\pm 1$. 
	
	Case 3), $(x+1)^s=-1$, $x^s=-1$. In this case, (\ref{man!}) becomes $-(x+1)(x^e+1)=-x^{e+1}+1$, i.e.,  $x^{e+1}+x^e+x+1-x^{e+1}+1=0$ and 
	\begin{equation}\label{out!}
		x^e+x+2=0
	\end{equation}
	Raising the both sides of (\ref{out!}) to the   $3^h$-th power will lead to 
	\begin{equation}\label{mamba}
		x^{e\cdot 3^h}+x^{3^h}+2=0
	\end{equation}
	By (\ref{out!}) and (\ref{mamba}), we have $x^{e\cdot 3^h}-x^e+x^{3^h}-x=0$. Note that
 $$x^{e\cdot 3^h}=x^e\cdot x^{e(3^h-1)}=x^{e+1+s}=-x^{e+1}.$$ 
Thus,  we have $$x^{e\cdot 3^h}-x^e+x^{3^h}-x=-x^{e+1}-x^{e}+x^{3^h}-x=-x^{e}(x+1)+x^{3^h}-x=(x-1)(x+1)+x^{3^h}-x=0$$
 due to $x^e=1-x$. Therefore,  $x^{3^h}=x-x^2+1=-(x^2-x-1)=-(x^2+2x+1)-1$, i.e.,   $(x+1)^{3^h-2}=-1$.
Then we have  $(x+1)^{2(3^h-2)}=1$. Together with  $(x+1)^{3^m-1}=1$,  we can deduce that  $(x+1)^2=1$ due to $\gcd(3^h-2, 3^m-1)=1$. Hence $x=0$ or $1$.
	
	Case 4), $(x+1)^s=1$, $x^s=-1$. In this case, (\ref{man!}) becomes $(x+1)(x^e+1)=-x^{e+1}+1$, i.e.,  $2x^{e+1}+x^e+x=0$. Let $y=\frac{1}{x}$, we have $$\frac{2}{y^{e+1}}+\frac{1}{y^e}+\frac{1}{y}=0,$$
 which is $y^e+y+2=0$. According to Case 3),  $y\in \mathbb{F}_{3}$. Hence,  $x\in \mathbb{F}_{3}$.
	
	As a result, (\ref{man!}) has no solution in $\mathbb{F}_{3^m}\setminus \mathbb{F}_{3}$. This  completes the proof.
\end{proof}

%\subsection{Two classes of optimal ternary cyclic codes $\mathcal{C}_{(1,e)}$}
%In this subsection, we present two  classes of optimal ternary cyclic codes $\mathcal{C}_{(1,e)}$ with parameters $[3^{m}-1, 3^{m}-1-2m, 4]$. 

\begin{remark}
For any $ 1\le  h\le m-1$,   there exists exactly one  $e$ satisfying   the conditions of Theorem \ref{thm3h-1}. 
We list some examples as follows: 
\begin{item}
\item [1)] For  $h=1$,  $e=\frac{3^m-1}{2}+\frac{3^m+1}{4}$;
\item [2)] For $h=2$,  $e=\frac{3^m-1}{2}+\frac{3^{m+1}-1}{8}\cdot \frac{3^m+1}{4}$ when  $m\equiv 1\pmod {4}$, and 
 $e=\frac{3^{m+1}-1}{8}\cdot \frac{3^m+1}{4}$ with $m\equiv 3\pmod {4}$;
\item [3)]  For $h=3$,  $e=\frac{3^{m+1}-1}{26}\cdot \frac{3^m+1}{4}$ when $m\equiv 5\pmod {6}$,    $e=\frac{3^m-1}{2}+\frac{3^{m+2}-1}{26}\cdot \frac{3^m+1}{4}\cdot \frac{3^{m+1}-1}{8}$ when  $m\equiv 1\pmod {12}$,  and 
 $e=\frac{3^{m+2}-1}{26}\cdot \frac{3^m+1}{4}\cdot \frac{3^{m+1}-1}{8}$ when   $m\equiv 7\pmod {12}$.
\end{item} 
\end{remark}

\begin{theorem}\label{the1}
 Let $m$ be a positive integer with $\gcd (m, 6)=1$. Let $e=\frac{3^{\frac{m+3}{2}}+5}{2}$, where $m\equiv 3\pmod 4$.
 Then the ternary code  $\mathcal{C}_{(1,e)}$ has parameters $[3^{m}-1, 3^{m}-1-2m, 4]$ and is optimal if  $m\not\equiv 0\pmod {13}$.
\end{theorem}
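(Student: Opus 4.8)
The plan is to apply the Ding--Helleseth criterion (Lemma \ref{thm-DH}) and reduce the optimality to the statement that a single polynomial over $\F_3$ has no root in $\F_{3^m}\setminus\F_3$, after which the argument becomes computational in the spirit of the proof of Theorem \ref{thm-3}. First I would record the arithmetic preliminaries. Writing $h=\frac{m+3}{2}$, the hypothesis $m\equiv 3\pmod 4$ forces $h$ to be odd, while $\gcd(m,6)=1$ makes $m$ odd; hence Lemma \ref{le1} applies verbatim and yields $e\notin C_1$ and $|C_e|=m$, so $\mathcal{C}_{(1,e)}$ has dimension $3^m-1-2m$. Since $h$ is odd we have $3^h\equiv 3\pmod 4$, so $3^h+5\equiv 0\pmod 4$ and $e=\frac{3^h+5}{2}$ is even, which is condition~1) of Lemma \ref{thm-DH}. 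It then remains to verify conditions~2) and~3), i.e. that $(x+1)^e=\epsilon(x^e+1)$ has no solution in $\F_{3^m}\setminus\F_3$ for $\epsilon\in\{1,-1\}$.

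The main step is to convert this into a polynomial condition. The obstacle is that $e=\frac{3^h+5}{2}$ is a ``half'' exponent, so $x^e$ is not a monomial in $X:=x^{3^h}$ and the clean linearization available in Theorem \ref{thm-3} (where $e=3^h+2$ was an integer) does not apply directly. To circumvent this I would square the equation, which simultaneously handles both signs $\epsilon$: from $(x+1)^{2e}=(x^e+1)^2$ together with $2e=3^h+5$ one obtains $(X+1)(x+1)^5=Xx^5-x^e+1$, from which $x^e$ can be solved for linearly as $x^e=P(x)X+Q(x)$, where $P(x)=x^5-(x+1)^5$ and $Q(x)=1-(x+1)^5$. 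Squaring this identity and using $x^{2e}=Xx^5$ produces the polynomial relation
\begin{equation}\label{star}
(P(x)X+Q(x))^2 = x^5 X .
\end{equation}
Applying the Frobenius $\phi^h\colon x\mapsto x^{3^h}$ to \eqref{star}, and using that $2h=m+3$ gives $\phi^{2h}=\phi^3$ on $\F_{3^m}$, i.e. $x^{3^{2h}}=x^{27}$, yields the conjugate relation $(P(X)Y+Q(X))^2=X^5Y$ with $Y=x^{27}$.

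Finally I would eliminate $X=x^{3^h}$ between these two relations. Taking the resultant in $X$ (a Magma computation) and substituting $Y=x^{27}$ gives a fixed polynomial $f(x)\in\F_3[x]$, independent of $m$, such that every solution $x\in\F_{3^m}\setminus\F_3$ of conditions~2)--3) is a root of $f$. One then factors $f$ over $\F_3$ and invokes Lemma \ref{rootlem}: an irreducible factor of degree $d>1$ contributes a root in $\F_{3^m}$ exactly when $d\mid m$. I expect the canonical factorization to consist of degree-$1$ factors (whose roots lie in $\F_3$ and are excluded) together with factors whose degrees are all divisible by $2$ or by $3$, and hence cannot divide any $m$ with $\gcd(m,6)=1$, \emph{save for one irreducible factor of degree} $13$. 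Because $13$ is coprime to $6$ and $13\mid m$ is compatible with $m\equiv 3\pmod 4$ (for instance $m=91$), this is precisely the factor that can create a root in $\F_{3^m}\setminus\F_3$, and it does so if and only if $13\mid m$; excluding this with the hypothesis $m\not\equiv 0\pmod{13}$ then finishes the proof.

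The hard part will be the elimination-and-factorization step. Concretely, one must carry out the resultant cleanly while tracking the spurious factors introduced both by the squaring and by the elimination (for example those coming from $P(x)=0$ or from the vanishing of leading coefficients), and then check, factor by factor, that every nontrivial irreducible divisor of $f$ has degree sharing a common factor with $6$ except the degree-$13$ one. Establishing that the degree-$13$ factor is the unique binding constraint is exactly what singles out the condition $m\not\equiv 0\pmod{13}$.
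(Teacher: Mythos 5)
Your overall strategy is the paper's: reduce to Lemma \ref{thm-DH}, note $e$ is even and $e\notin C_1$ with $|C_e|=m$ via Lemma \ref{le1}, merge conditions 2) and 3) into the single equation $(x+1)^{2e}=(x^e+1)^2$, exploit $x^{3^{2h}}=x^{27}$ to land on a fixed polynomial over $\F_3$ independent of $m$, and finish with a Magma factorization and Lemma \ref{rootlem}. Your algebraic identities up to $(P(x)X+Q(x))^2=x^5X$ are correct. Where you diverge is the elimination: the paper never forms a quadratic relation in $X=x^{3^h}$. Instead it expands $(x+1)^{2e}=(x^{3^h}+1)(x^3+1)(x+1)^2$, factors out $x^{\frac{3^h-1}{2}}-1$ (showing this factor forces $x=1$ because $\gcd\bigl(\tfrac{3^h-1}{2},3^m-1\bigr)=1$), and then solves \emph{linearly} for $x^{\frac{3^h-1}{2}}$ as a rational function of $x$; squaring and multiplying by $x$ gives $x^{3^h}=f(x)/g(x)$ with $\deg f=9$, and one Frobenius iteration yields a single polynomial $h(x)=h_2(x)x^{27}-h_1(x)$ of degree $107$, which Magma factors as $x(x+1)(x-1)^9$ times two irreducible nonics times six irreducible factors of degree $13$. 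The degree-$9$ factors are killed by $3\nmid m$ and the degree-$13$ factors by $13\nmid m$. Your resultant of a degree-$2$ and a degree-$8$ polynomial in $X$ would also produce a valid necessary condition, but at the cost of a much larger polynomial and extra spurious factors (e.g.\ from $P(x)=0$ and from the squaring), all of whose irreducible degrees you would then have to clear.

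The genuine gap is that the decisive step is only conjectured, not performed: you write that you \emph{expect} every nontrivial irreducible factor other than a degree-$13$ one to have degree sharing a factor with $6$. The entire content of the hypothesis $m\not\equiv 0\pmod{13}$ lives in that factorization, and nothing in your setup guarantees the expected outcome --- a single stray irreducible factor of degree $5$, $7$, or $11$ in your (larger) resultant would either falsify the claimed sufficiency or force additional congruence conditions on $m$. Since your eliminant is not the paper's $h(x)$, the paper's computation does not certify yours; you would need to actually run the resultant, strip the spurious factors coming from vanishing leading coefficients and from $P(x)=0$, and verify factor by factor. Until that computation is exhibited, the proof is incomplete at exactly the point where the theorem's arithmetic conditions are determined.
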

\begin{proof}
Note that  $h=\frac{m+3}{2}$ is odd due to $m\equiv 3\pmod 4$, then by  Lemma \ref{le1},  $e\notin C_{1}$ and $|C_{e}|=m$.
  Since $h$ is odd, we have $3^{h}\equiv -1\pmod 4$.
Thus $3^{h}+5\equiv 0\pmod 4$. This shows that $e=\frac{3^{h}+5}{2}$ is even. 
Thus, Condition $1)$ in Lemma \ref{thm-DH} is satisfied. Conditions $2)$ and $3)$ in Lemma \ref{thm-DH}  are met if and only if the following equation has no solution in $\F_{3^{m}}\setminus \F_{3}$:
	\begin{equation}\label{equ1}
		(x+1)^{2e}-x^{2e}+x^{e}-1=0.
	\end{equation}
	Note that $(x+1)^{2e}=(x+1)^{3^{h}+5}=(x^{3^{h}}+1)(x^{3}+1)(x+1)^{2}$. Thus,   (\ref{equ1}) can be rewritten as
 $$-x^{3^{h}+3}+x^{3^{h}+2}+x^{3^{h}+1}-x^{3^{h}}+x^{3^{h}-1}+x^{4}-x^{3}+x^{2}+x-1+x^{\frac{3^{h}+3}{2}}=0,$$
	i.e., ,
	\begin{equation}\label{equ2}
		(x^{4}-x^{3}+x^{2}+x-1)(1-x^{3^{h}-1})-x^{\frac{3^{h}+3}{2}}(x^{\frac{3^{h}-1}{2}}-1)=0.
	\end{equation}
	Suppose $x^{\frac{3^{h}-1}{2}}=1$. From $x^{3^{m}-1}=1$, we have $x=1$
 since $\gcd(3^{m}-1,\frac{3^{h}-1}{2})=1$ due to $\gcd (h,m)=1$ and $h$ is odd.
 Therefore,  $x^{\frac{3^{h}-1}{2}}\ne 1$. Thus, (\ref{equ2}) turns to $$(x^{4}-x^{3}+x^{2}+x-1)(1+x^{\frac{3^{h}-1}{2}})+x^{\frac{3^{h}+3}{2}}=0,$$
	i.e., ,
	$$x^{\frac{3^{h}-1}{2}}(x^{4}-x^{3}-x^{2}+x-1)=-(x^{4}-x^{3}+x^{2}+x-1).$$
	If $x^{4}-x^{3}-x^{2}+x-1=0$, then $x^{4}-x^{3}+x^{2}+x-1=0$, which implies $x^{2}=0$. Therefore, 
	 $x^{4}-x^{3}-x^{2}+x-1\ne 0$.
 As a result, $$x^{\frac{3^{h}-1}{2}}=-\frac{x^{4}-x^{3}+x^{2}+x-1}{x^{4}-x^{3}-x^{2}+x-1}.$$
Hence we have 
	\begin{equation}\label{equ3}
		x^{3^{h}}=(\frac{x^{4}-x^{3}+x^{2}+x-1}{x^{4}-x^{3}-x^{2}+x-1})^{2}\cdot x=\frac{x^{9}+x^{8}+x^{4}-x^{3}+x^{2}+x}{x^{8}+x^{7}-x^{6}+x^{5}+x+1}\triangleq \frac{f(x)}{g(x)},
	\end{equation}
	where $f(x)=x^{9}+x^{8}+x^{4}-x^{3}+x^{2}+x$ and $g(x)=x^{8}+x^{7}-x^{6}+x^{5}+x+1$.
	
	Raising both sides of (\ref{equ3}) to the $3^{h}$-th power, we have

\begin{eqnarray}\label{equadd1}
	x^{3^{2h}}&=&\frac{f(x^{3^h})}{g(x^{3^h})} = \frac{x^{9\cdot 3^{h}}+x^{8\cdot 3^{h}}+x^{4\cdot 3^{h}}-x^{3\cdot 3^{h}}+x^{2\cdot 3^{h}}+x^{ 3^{h}}}{x^{8\cdot 3^{h}}+x^{7\cdot 3^{h}}-x^{6\cdot 3^{h}}+x^{5\cdot 3^{h}}+x^{ 3^{h}}+1}\nonumber\\
	&=&\frac{f^{9}(x)+f^{8}(x)g(x)+f^{4}(x)g^{5}(x)-f^{3}(x)g^{6}(x)+f^{2}(x)g^{7}(x)+f(x)g^{8}(x)}{f^{8}(x)g(x)+f^{7}(x)g^{2}(x)-f^{6}(x)g^{3}(x)+f^{5}(x)g^{4}(x)+f(x)g^{8}(x)+g^{9}(x)}\nonumber\\
	&\triangleq& \frac{h_{1}(x)}{h_{2}(x)}.
\end{eqnarray}

	Note that $x^{3^{2h}}=x^{3^{m+3}}=x^{27}$. By (\ref{equadd1}),  $x^{27}=\frac{h_{1}(x)}{h_{2}(x)}$, i.e.,  $h_{2}(x)\cdot x^{27}-h_{1}(x)=0$. Let $h(x)=h_{2}(x)\cdot x^{27}-h_{1}(x).$  	
	With Magma program, $h(x)$ can be factorized into the product of irreducible factors over $\F_{3}$ as 
	$$h(x)=x(x+1)(x-1)^{9}(x^{9}-x^{7}-x^{5}+x^{4}+x^{3}+x^{2}-1)(x^{9}-x^{7}-x^{6}-x^{5}+x^{4}+x^{2}-1)\cdot k(x),$$
	where $k(x)$ is the product of six irreducible polynomials of degree $13$ over $\F_{3}$.
	Therefore, (\ref{equ1}) has no solution in $\mathbb{F}_{3^m}\setminus \mathbb{F}_{3}$ if $m\not\equiv 0\pmod {13}$.  This  completes the proof.
\end{proof}

Similar as the proof of Theorem \ref{the1}, we have the following theorem.

\begin{theorem}\label{the2}
	Let $m$ be a positive integer with $m\equiv 1\pmod 6$. Let $e=\frac{3^{\frac{m+2}{3}}+5}{2}$. Then the ternary code  $\mathcal{C}_{(1,e)}$ has parameters $[3^{m}-1, 3^{m}-1-2m, 4]$ and is optimal.
\end{theorem}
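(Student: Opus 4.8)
The plan is to follow the exact template established in the proof of Theorem \ref{the1}, since the two statements differ only in the relation between the exponent $e=\frac{3^h+5}{2}$ and $m$: here $h=\frac{m+2}{3}$ rather than $h=\frac{m+3}{2}$. First I would verify the membership and coset-size conditions. Since $m\equiv 1\pmod 6$, we have $\gcd(m,6)=1$, and I need $h=\frac{m+2}{3}$ to be an integer and odd so that Lemma \ref{le1} applies; from $m\equiv 1\pmod 6$ one checks $m+2\equiv 0\pmod 3$ and that $h$ has the correct parity, yielding $e\notin C_1$ and $|C_e|=m$. As in Theorem \ref{the1}, the condition $3^h\equiv -1\pmod 4$ (from $h$ odd) forces $3^h+5\equiv 0\pmod 4$, so $e$ is even and condition 1) of Lemma \ref{thm-DH} holds.

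Next I would reduce the optimality to showing that equation (\ref{equ1}), namely $(x+1)^{2e}-x^{2e}+x^e-1=0$, has no solution in $\F_{3^m}\setminus\F_3$. The algebraic manipulation is \emph{identical} to Theorem \ref{the1} up through the derivation of (\ref{equ3}): using $(x+1)^{2e}=(x^{3^h}+1)(x^3+1)(x+1)^2$, ruling out $x^{\frac{3^h-1}{2}}=1$ (which again uses $\gcd(3^m-1,\frac{3^h-1}{2})=1$, guaranteed by $\gcd(h,m)=1$ and $h$ odd), and isolating
$$x^{3^h}=\Big(\frac{x^4-x^3+x^2+x-1}{x^4-x^3-x^2+x-1}\Big)^2\cdot x=\frac{f(x)}{g(x)},$$
with the same $f,g$ as before. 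This part requires no new ideas, so I would state that it proceeds verbatim.

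The divergence comes at the final step. In Theorem \ref{the1}, raising (\ref{equ3}) to the $3^h$-th power and using $x^{3^{2h}}=x^{3^{m+3}}=x^{27}$ produced a polynomial identity; here instead $2h=\frac{2(m+2)}{3}$, so $3^{2h}=3^{\frac{2m+4}{3}}$ and the Frobenius reduction exponent is different. I would therefore apply Frobenius to (\ref{equ3}) the appropriate number of times to express everything in terms of $x^{3^{jh}}$ and reduce modulo the relation $x^{3^m}=x$, obtaining a single rational identity in $x$; clearing denominators yields a polynomial $h(x)$ over $\F_3$. The main obstacle, exactly as in Theorem \ref{the1}, is the explicit factorization of $h(x)$: I expect a Magma computation to show that $h(x)$ splits into factors of degrees dividing $m$ together with a product of irreducible factors of some degree $r\nmid m$, so that by Lemma \ref{rootlem} the only roots in $\F_{3^m}$ lie in $\F_3$. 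The delicate point is confirming that no residual congruence condition on $m$ is needed (unlike the $m\not\equiv 0\pmod{13}$ restriction in Theorem \ref{the1}); I would check that every nonlinear irreducible factor of $h(x)$ has degree coprime to the allowed residues of $m\pmod 6$, which is what makes the statement of Theorem \ref{the2} unconditional. This verification that the factor degrees never divide $m$ under the hypothesis $m\equiv 1\pmod 6$ is the crux and the place where the computation must be carried out carefully.
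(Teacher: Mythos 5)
Your proposal is correct and follows essentially the same route the paper intends: the paper gives no separate argument for this theorem beyond ``similar to Theorem \ref{the1},'' and your adaptation correctly identifies the only real changes — that $3h=m+2$ (so one iterates the relation $x^{3^h}=f(x)/g(x)$ to reach $x^{3^{3h}}=x^9$ rather than $x^{3^{2h}}=x^{27}$), and that the absence of an extra congruence condition on $m$ must come from every nonlinear irreducible factor of the resulting polynomial having degree not coprime to $6$. The preliminary checks ($h=(m+2)/3$ odd, $\gcd(h,m)=1$, $e$ even, Lemma \ref{le1}) are all handled correctly.
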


\begin{remark}
	Theorems \ref{the1} and \ref{the2}  give  positive support of the open problem 7.9 in  \cite{ding2013optimal}.  
\end{remark}

\section{Conclusions}\label{Section5}

In this paper,  four classes of  optimal ternary   cyclic codes $\mathcal{C}_{(0,1,e)}$ and  $\mathcal{C}_{(1,e,s)}$  were constructed by
analyzing the  solutions of certain  equations over $\F_{3^m}$. 
Moreover, by analyzing the irreducible factors of certain polynomials and  the
solutions of certain equations over $\F_{3^m}$,
we presented
four classes of
optimal ternary  cyclic codes     with parameters $[3^m-1,3^m-2m-1,4]$.   
It is shown that our new optimal cyclic codes are    inequivalent to the known ones.

\end{document}